\documentclass[letterpaper, 10 pt, conference]{ieeeconf}  
\IEEEoverridecommandlockouts                              % This command is only needed if 
\usepackage{amssymb}

\usepackage{amsthm}
\let\labelindent\relax
\usepackage{enumitem}
\usepackage{graphicx}               % Necessary to use \scalebox

\theoremstyle{plain}
\newtheorem{theorem}{Theorem}
\newtheorem{lemma}{Lemma}

\newtheorem{corollary}{Corollary}
\theoremstyle{definition}
\newtheorem{definition}{Definition}
\newtheorem{assumption}{Assumption}
\newtheorem{problem}{Problem}

\theoremstyle{remark}
\newtheorem{remark}{Remark}

\usepackage{mathrsfs}          % \mathscr
\usepackage{mathtools}         % extends amsmath
\usepackage{bm}                % \bm for bold-math
\usepackage{dsfont}            % \mathds{1} indicator
\usepackage{centernot}         % \centernot
\usepackage[normalem]{ulem}    % \uline etc. (normalem keeps \emph)
\usepackage[export]{adjustbox}
\usepackage[ruled,vlined]{algorithm2e}
\usepackage{subcaption}
\usepackage[space]{grffile}    % filenames with spaces
\usepackage{svg}
\usepackage{pdfsync}

\definecolor{darkblue}{RGB}{0,0,102}
\definecolor{lightblue}{RGB}{77,77,148}
\definecolor{gold}{RGB}{234,170,0}
\definecolor{metallic_gold}{RGB}{139,111,78}

\renewcommand{\rm}[1]{\textrm{#1}}   % override LaTeX2e default

\newcommand{\lmat}{\begin{bmatrix}}
\newcommand{\rmat}{\end{bmatrix}}

\DeclareMathOperator*{\argmin}{argmin}

\newcommand{\R}{\mathbb{R}}
\newcommand{\E}{\mathbb{E}}            % expectation
\newcommand{\Prob}{\mathbb{P}}
\newcommand{\Filt}{\mathcal{F}}
\newcommand{\gen}{\mathcal{L}}          % infinitesimal generator
\newcommand{\intv}{\mathcal{M}}         % intervention operator
\newcommand{\Guard}{\Gamma}
\newcommand{\Xspace}{\mathcal{X}}
\newcommand{\Xtgt}{\mathcal{X}_{t}}
\newcommand{\Xuns}{\mathcal{X}_{u}}
\newcommand{\Xini}{\mathcal{X}_{0}}

\newcommand{\st}{\psi}                  % the stopping time
\newcommand{\sep}{\chi}

\newcommand{\Yproc}{Y} 
\newcommand{\tmin}{\tau_{\min}}
\newcommand{\amax}{A_{\max}}

\newcommand{\jumps}{j_{\st}}% the frozen process
\newcommand{\sbar}{\bar{\sigma}}
\newcommand{\ind}[1]{\mathbf{1}\!\left[#1\right]}

\newcommand{\Flow}{\mathcal{C}}

\pdfoutput=1

\title{
\LARGE \bf
Reach-Avoid Certificates for Contact-based Stochastic Hybrid Systems \vspace{-0.5em}}

\author{Tianyi Xia$^1$ and Ryan K. Cosner$^2$
\thanks{$^{1}$ Xia is with GRASP Lab, University of Pennsylvania, PA 19104, USA. Email: \texttt{tianyxia@engineering.upenn.edu}}
\thanks{$^{2}$ Cosner is with the Department of Mechanical Engineering, Tufts University, 419 Boston Avenue, Medford, MA 02155, USA. Email: \texttt{ryan.cosner@tufts.edu}}
}

\begin{document}

\maketitle

\thispagestyle{empty}
\pagestyle{empty}

%%%%%%%%%%%%%%%%%%%%%%%%%%%%%%%%%%%%%%%%%%%%%%%%%%%%%%%%%%%%%%%%%%%%%%%%%%%%%%%%
\begin{abstract}
% \todo{
% \begin{itemize}
%     \item 1 quick motivation sentence
%     \item $\sim $ 2 sentences of novel method
%     \item 1 sentence of contribution relative to existing methods. 
%     \item 1 sentence pointing towards applications, code base, or links to videos
% \end{itemize}
% }
This work studies certified reach-avoid control of stochastic hybrid systems in which a stochastic differential equation governs continuous flow and a controlled stochastic kernel governs guard-triggered resets. The input appears only in the reset kernel, where it determines the post-jump distribution. Certificates for such systems have so far assumed spontaneous or controller-chosen jump times, a finite horizon, or discrete-time dynamics; we certify, over the infinite horizon, that the target is reached before the unsafe set with at least a prescribed probability. First, we introduce reach-avoid contact certificates, whose flow and
jump conditions are imposed only on a sublevel corridor, and prove that
any such certificate with a compatible reset policy meets the
specification; a pointwise safety filter extends this guarantee from verification to control by constraining arbitrary
nominal controllers. Next, we prove that executions are non-Zeno and bound the expected settling time and reset count.
We frame certificate search as a linear program over any twice-differentiable dictionary 
followed by independent verification. We illustrate our approach via comparisons where the certified region is a subset covering 93.6\% of the ground truth and where our certificate-based safety filter reaches the goal while reducing collisions from 68 to zero across 200 episodes of a dribbling ball problem. 
\end{abstract}

%%%%%%%%%%%%%%%%%%%%%%%%%%%%%%%%%%%%%%%%%%%%%%%%%%%%%%%%%%%%%%%%%%%%%%%%%%%%%%%%
\section{INTRODUCTION}
Stochastic hybrid systems (SHS), in which a diffusion is interrupted by discrete transitions that redraw the state from a stochastic kernel, model plants with impacts, mode switches, and event-triggered actuation \cite{hu2000towards,bujorianu2006toward,teel2014survey}.
Guarantees bounding the probability of failure for such systems have been obtained from barrier
and supermartingale inequality certificates on the transitions without discretizing the state space or computing reachable sets \cite{kushner1967stochastic,prajna2007framework,lavaei2022shs}. The
transition inequality, however, is available only under assumptions on
when the transitions occur. Existing formulations take the transition times to be spontaneous, driven by a Poisson process or a state-dependent rate \cite{prajna2007framework}, scheduled by a timer \cite{lavaei2022shs}, or selected by the controller, as in impulse control \cite{bensoussan1984impulse}. None yield a transition inequality for state-based forced transitions, the other standard jump mechanism of general stochastic hybrid systems \cite{hu2000towards,bujorianu2006toward}. There the jump is triggered by the state reaching a guard, so its timing is dictated by the continuous trajectory rather than by a clock or a controller. %, while its outcome is still drawn from a stochastic kernel.

The class considered in this paper carries its entire control authority in these transitions. The flow is an autonomous stochastic differential equation; on reaching a guard, the state is redrawn from a kernel shaped by the input, so the controller is idle between transitions, and at a transition, it shapes the distribution of the post-impact state. 
The problem specification is \textit{reach-avoid}, i.e., to reach a target set before entering an unsafe set, and the guarantee sought is a lower bound, valid over the infinite horizon, on the probability that a declared reset policy achieves it.

Three features of this class of SHS place it outside the existing theory. (1)
the transition times are decided by the state, so the inequality
across a transition must hold at every point of the guard that the
flow can reach, (2) the input is confined to the reset map, so the certificate must
also shape the continuous flow between transitions, and (3)  a guarantee over the infinite
horizon presupposes that the transitions do not accumulate infinitely in finite
time (i.e., are non-Zeno); for a stochastic plant this cannot be assumed, since the dwell
times are random, and their infimum may be zero.

Our certificate is a function $V$ that decreases at a fixed rate along the continuous flow and does not increase in expectation across resets, both required only on the sublevel corridor $\{~x~|~V(x)<\lambda~\}$ where % together with 
$V(x)\ge\lambda$ on the unsafe set and $V\le1$ on the initial set for $\lambda > 1$. Freezing $V(x_t)$ when $x_t$ leaves this corridor generates a supermartingale and Ville's inequality gives the desired infinite-horizon probability bound. We prove the latter from a uniform margin by which every reset leaves the guard and from bounds on the drift and diffusion of the guard function. 

The contributions of this work are threefold. 
Firstly, we introduce reach-avoid contact certificates and prove that, with a compatible reset policy, they guarantee the specification over the infinite horizon with probability at least $1-1/\lambda$. The same certificate is then used to shield an arbitrary base policy through a pointwise check at each reset. Secondly, we prove that executions are non-Zeno under two conditions on the plant, and we derive bounds on the expected time and number of contacts until the sublevel corridor is left. Finally, we show that the certificate conditions are linear in $V$, so
  that the candidate certificate search can be framed as a linear program over any twice-differentiable
  dictionary and verified by interval branch and bound, and we provide three example deployments of our certificate.

\subsection{Related Work}

\emph{Stochastic hybrid systems.} General models that combine a
diffusion with forced (guard-triggered) and spontaneous jumps were set
out in \cite{hu2000towards,bujorianu2006toward}; setting
$g\equiv0$ recovers the controlled piecewise-deterministic Markov
processes of \cite{davis1984pdmp}, and the Lyapunov theory for
stability of this family is surveyed in \cite{teel2014survey}. Safety
certificates for stochastic hybrid systems have so far treated jumps
that arrive at a state-dependent rate or through a Poisson process
\cite{prajna2007framework,lavaei2022shs}, or jump times chosen by the
controller as in impulse control \cite{bensoussan1984impulse}. A reset
that is forced by contact with a guard and whose outcome is random fits neither
setting and, to our knowledge, is not covered by any existing reach-avoid certificate.

\emph{Supermartingale certificates.} The use of supermartingales to
bound the probability that a stochastic system leaves a set goes back
to \cite{kushner1967stochastic}. Sum-of-squares programming made such
certificates computable for polynomial systems
\cite{prajna2007framework}, finite-horizon variants relax the
supermartingale condition to a bounded growth rate
\cite{steinhardt2012finite,santoyo2021barrier}, and they have enabled control barrier
functions \cite{ames_control_2017} to be applied to diffusions
\cite{clark2021control}. Neural parametrizations remove the polynomial
restriction in discrete time \cite{mathiesen2023safety} and, for
reach-avoid-stay specifications of diffusions, in continuous time
\cite{neustroev2025neural}.
Reach-avoid supermartingales for discrete-time systems are learned in
\cite{zikelic2023learning}, and discrete-time barrier functions under
random disturbances are analyzed in
\cite{cosner23robust,cosner_bounding_2024}. All of these treat either
a diffusion without jumps or a sampled system with a fixed step; in
our setting the time between contacts is itself random and has to be
lower bounded before any martingale argument is applied.

\emph{Reachability and dynamic programming.} The reach-avoid probability of a diffusion is characterized by a partial differential
equation \cite{esfahani2016stochastic}, and that of a discrete-time
stochastic hybrid system by dynamic programming
\cite{abate2008probabilistic,summers2010verification}; both give exact
values but require a state grid. Hamilton-Jacobi reachability
\cite{bansal2017hj} has recently been extended to hybrid systems with
controlled and forced transitions \cite{borquez2026hj} under bounded
adversarial disturbances over a finite horizon on a state grid.

\emph{Hybrid systems with impacts.} Deterministic systems with impulse
effects are the standard model of legged locomotion
\cite{grizzle2001asymptotically,westervelt2003hybrid}, and
\cite{goebel2012hybrid} develops the general stability theory. Zeno
executions, in which contacts accumulate in finite time, are
characterized in \cite{lygeros2003dynamical} and are usually excluded
by assumption. In a stochastic plant, that assumption is not available,
and we prove the property from the reset margin and the bounds on the
guard function instead.

\section{Problem Formulation}\label{sec:background}

We consider stochastic hybrid systems\footnote{\emph{Notation.}
$\R_{\ge0}$ denotes the nonnegative reals, 
$\mathcal B(\cdot)$ the Borel
$\sigma$-algebra, and $\ind{\cdot}$ the indicator. A \emph{stochastic
kernel} on a Borel space $S$ given $T$ is a map $Q(\cdot\mid\cdot)$ that
is a probability measure on $\mathcal B(S)$ for each fixed element of
$T$, and Borel measurable in that element for each fixed
$B\in\mathcal B(S)$. All processes are defined on a filtered probability
space $(\Omega,\Filt,\Prob,(\Filt_t)_{t\ge0})$ satisfying the usual
conditions. We write $\Prob_{x_0}$ and $\E_{x_0}$ for the law and
expectation of the process started at $x_0$.}.
$\Sigma$ on a state space
$\Xspace\subseteq\R^{l}$ in which a diffusion is interrupted by
guard-triggered \emph{controlled random resets}:
\begin{equation}\label{eq:shs}
  \begin{cases}
    \mathrm{d}x_t = f(x_t)\,\mathrm{d}t + g(x_t)\,\mathrm{d}W_t,
      & x_t\in \Flow,\\[2pt]
    x^{+}\sim Q(\cdot\mid x^{-},u), & x^{-}\in\Guard,
  \end{cases}
\end{equation}
with $W$ a standard Wiener process. The flow set and guard are the level
sets of a scalar function $h:\Xspace\to\R$,
\begin{equation}\label{eq:guard}
  \Flow:=\{x:\ h(x)>0\},\qquad \Guard:=\{x:\ h(x)=0\},
\end{equation}
where $h\in C^{4}$ and $0$ is a regular value of $h$, so $\Guard$ is an
embedded codimension-one submanifold. The coefficients $f,g\in C^{2}$ are
Lipschitz with linear growth, ensuring a pathwise-unique strong solution
between resets \cite{oksendal2003sde}. On hitting $\Guard$ at pre-jump state
$x^{-}$, the state is redrawn from the stochastic kernel $Q$ under a
control input $u\in A(x^{-})$, where $A:\Guard\rightrightarrows U$ is a
Borel measurable action map with nonempty values into a Borel space $U$.
Both channels are stochastic: the flow through $g\,\mathrm{d}W_t$ and the
reset through~$Q$.

\subsection{Executions and operators}\label{sec:bg-exec}

Executions alternate between flows and resets. Define \emph{contact times}
\begin{equation*}\label{eq:contact-times}
  \tau_1:=\inf\{t\ge0: x_t\in\Guard\},\quad
  \tau_{k+1}:=\inf\{t>\tau_k: x_t\in\Guard\},
\end{equation*}
with $\inf\varnothing:=\infty$; define $x^{-}_k:=x_{\tau_k^{-}}\in\Guard$ and 
$x^{+}_k:=x_{\tau_k}$ for the pre- and post-jump states (so $x_t$ is
right-continuous), and $\Delta t_k:=\tau_{k+1}-\tau_k$ for the $k$-th
dwell time. A Borel set $S$ is \emph{entered} at time $t$ if $x_t\in S$
or $x_{t^{-}}\in S$ (with $x_{0^{-}}:=x_0$), so that a contact counts
through its pre-jump state as well as its post-jump state; the
\emph{entering state} at an entrance time is $x_{t^{-}}$ if
$x_{t^{-}}\in S$ and $x_t$ otherwise, and reaching a set always means
entering it in this sense. Each $\tau_k$ is a stopping time by the
debut theorem \cite{dellacherie1978probabilities}, as is every entrance
time of a Borel set appearing below. The number of resets is \textit{a priori} unbounded;
executions are globally defined precisely when $\sum_k\Delta t_k=\infty$
almost surely.

Certificates are time-invariant functions $V:\Xspace\to\R$; for
$V\in C^{2}$ the flow and the resets act through the generator and the
\emph{intervention operator} denoted as:
\begin{align}
  \gen V &:= \nabla V\cdot f
     +\tfrac12\operatorname{tr}\!\big(gg^{\top}\nabla^{2}V\big),
     \label{eq:generator}\\
  \intv V(x^{-}) &:= \inf_{u\in A(x^{-})}\
     \E_{x^{+}\sim Q(\cdot\mid x^{-},u)}\big[V(x^{+})\big],
     \label{eq:intervention}
\end{align}
which are the continuous and discrete halves of a quasi-variational inequality
\cite{bensoussan1984impulse}.

\subsection{Problem statement}\label{sec:bg-problem}

The specification consists of Borel sets $\Xini$ (initial), $\Xtgt$
(target) and $\Xuns$ (unsafe), with $\Xtgt$ closed and $\Xini\cap\Xuns=\Xtgt\cap\Xuns=\varnothing$. For confidence $p\in[0,1)$, set
  $\lambda:=\tfrac{1}{1-p}\in(1,\infty).$

\begin{problem}[Certified reach-avoid]\label{prob:main}
Given the SHS $\Sigma$, specification sets $(\Xini,\Xtgt,\Xuns)$ and a
confidence level $p\in[0,1)$, produce a reset policy $ u = \pi(x^-)$ together
with a certificate guaranteeing event $\mathcal{A}$, %of the fact 
for every $x_0\in\Xini$, the
closed loop reaches $\Xtgt$ before entering $\Xuns$, with probability
at least $p$:
\begin{center}
    $\mathcal{A} = \big\{\exists\,T<\infty:\ \Xtgt\ \text{is entered at }T\
  \text{and}\ x_t\notin\Xuns, \forall t<T\big\}$\\
  $\mathbb{P}_{x_0} [\mathcal{A}] \geq p $
\end{center}
\end{problem}

\subsection{Assumptions}\label{sec:bg-assumptions}

\begin{assumption}[Transversality and non-Zeno data]
\label{asm:wellposed}
Every reset must leave the guard, % behind, 
either in position or in
velocity.
\begin{description}[leftmargin=1.0em, labelindent=0.5em]
  \item[(B1)] \emph{Uniform separation.} There is $\delta>0$ such that
    one of the following holds for all $x^{-}\in\Guard$,
    $u\in A(x^{-})$ and $Q(\cdot\mid x^{-},u)$-a.e.\ $x^{+}$:
    (i)~$h(x^{+})\ge\delta$, the reset lands at a positive distance
    from the guard; or (ii)~$\nabla h^{\top}g\equiv0$ on $\Xspace$ and
    $\gen h(x^{+})\ge\delta$, the reset relaunches from a
    configuration-level guard at a positive rate, the kernel being
    truncated away from tangential relaunch.

  \item[(B2)] \emph{Bounded data.} For finite constants $\amax,\sbar$,
    on $\Xspace$: $|\gen h|\le\amax$ and $\|\nabla h^{\top}g\|\le\sbar$
    in case \textup{(i)}; $|\gen(\gen h)|\le\amax$ and
    $\|\nabla(\gen h)^{\top}g\|\le\sbar$ in case \textup{(ii)}.
\end{description}
\end{assumption}

\begin{assumption}[Certificate and policy regularity]
\label{asm:regularity}
\leavevmode
\begin{description}[leftmargin=1em, labelindent=0.5em]
  \item[(R1)] $V\in C^{2}$ on $\Xspace$.
  \item[(R2)] $\E_{Q}[V(x^{+})| x^{-}_,u]<\infty$ for every
    $x^{-}\in\Guard$, $u\in A(x^{-})$.
    \item[(R3)] A \emph{reset policy} is a Borel measurable map
  $\pi:\Guard\to U$ with $\pi(x^{-})\in A(x^{-})$.
\end{description}
\end{assumption}

\begin{remark}[Role of the assumptions]\label{rem:asm-roles}
The dwell-time analysis of Lem.~\ref{lem:non-zeno} is a first-passage
problem for the scalar separated by (B1), which starts every excursion
above $\delta$ and must reach zero before the excursion can end. In
case (i) that scalar is $s_t:=h(x_t)$. % and the claim is
% immediate. 
In case (ii) it is $\dot s_t$, and
$\nabla h^{\top}g\equiv0$ is what makes
the reduction valid: by It\^o's formula
$\mathrm{d}s_t=\gen h(x_t)\,\mathrm{d}t$ with no martingale part, so
$s_t$ is $C^{1}$ along every excursion with $\dot s_t=\gen h(x_t)$ and
cannot return to zero before $\dot s_t$ does. Together, (B1) and (B2)
supply the constants
$\eta$, $p_0$ and $\tmin$. For the certificate, (R1) makes \eqref{eq:generator} well defined and (R2)
makes \eqref{eq:intervention} so, while (R3) ensures that no measurable-selection argument is required. 
\end{remark}

\section{Reach-Avoid Safety Guarantees for a Class of Stochastic Hybrid
Systems}\label{sec:guarantees}

Our main result is that a certificate in the sense of
Definition~\ref{def:certificate}, together with any compatible reset
policy, yields an infinite-horizon probabilistic reach-avoid guarantee
for the closed loop system. For a candidate $V$ and level $\lambda$, define
\begin{equation}\label{eq:owed}
  \Gamma_\lambda:=\{x\in\Guard\setminus\Xtgt:\ V(x)<\lambda\},
\end{equation}
and let $\st$ be the entrance time of $\Xtgt\cup\{V\ge\lambda\}$,
\begin{center}
\scalebox{0.9}{
$
  \st:=\inf\big\{t\ge0: x_t\in\Xtgt\cup\{V\ge\lambda\}\ \text{or}\
  x_{t^{-}}\in\Xtgt\cup\{V\ge\lambda\}\big\},
$
}    
\end{center}
and, on $\{\st<\infty\}$, let $\xi$ denote its entering state. The
certificate settles the run at $\st$: either $\xi\in\Xtgt$ %(target
%reached) 
or $V(\xi)\ge\lambda$. % (level crossed).

\begin{definition}[Reach-avoid contact certificate]\label{def:certificate}
A function $V$ satisfying (R1) is a \emph{reach-avoid contact
certificate} for $\Sigma$ at level $\lambda>0$ if for some $\zeta>0$:
\begin{align}
  V(x)\ &\ge\ 0, \quad \forall x\in\Xspace, \tag{C1}\label{c1}\\
  V(x)\ &\le\ 1, \quad \forall x\in\Xini, \tag{C2}\label{c2}\\
  V(x)\ &\ge\ \lambda, \quad \forall x\in\Xuns, \tag{C3}\label{c3}\\
  \gen V(x)\ &\le\ -\zeta, \quad \forall x\in \Flow\setminus\Xtgt
     \ \text{with}\ V(x)<\lambda, \tag{C4}\label{c4}\\
  \intv V(x^{-})\ &\le\ V(x^{-}), \quad \forall x^{-}\in\Gamma_\lambda.
     \tag{C5}\label{c5}
\end{align}
Conditions \eqref{c4}--\eqref{c5} are the flow and jump halves of a
quasi-variational inequality \cite{bensoussan1984impulse}:
$-\gen V\ge\zeta$ on the flow part of the corridor
$\{V<\lambda\}\setminus\Xtgt$ and $V\ge\intv V$ on its guard part
$\Gamma_\lambda$, with boundary values satisfying \eqref{c2}--\eqref{c3}.
\end{definition}

\begin{definition}[Compatible policy]\label{def:compat}
A reset policy $\pi$ (R3) is \emph{compatible} with $V$ if
\begin{equation}\label{eq:compat}
  \E_{Q}\big[V(x^{+})\,\big|\,x^{-},\pi(x^{-})\big]\ \le\ V(x^{-})
  \qquad\forall x^{-}\in\Gamma_\lambda .
\end{equation}
\end{definition}

Existence of a compatible policy implies \eqref{c5}, since
$\intv V\le\E_Q[V(x^{+})\mid\cdot\,,\pi(\cdot)]$ pointwise; conversely
\eqref{c5} asserts that adequate actions \emph{exist} at every point of $\Gamma_\lambda$, not that a measurable map selects them. The theorem is
therefore stated for a declared compatible $\pi$; the familiar
``greedy shield'' is recovered in
Corollary~\ref{cor:shield}.

\begin{theorem}[Reach-avoid guarantee]\label{thm:main}
If Asm.~\ref{asm:wellposed}--\ref{asm:regularity} hold, $V$ is a
reach-avoid contact certificate, and the reset policy $\pi$ is
compatible with $V$, then the closed loop system that applies $\pi$ at every
contact satisfies, for every $x_0\in\Xini$,
\begin{align}
 \Prob_{x_0}\big[\{\xi\in\Xtgt\}\wedge\{x_t\notin\Xuns,
     \forall t<\st\}\big]\ &\ge\ 1-\tfrac{1}{\lambda}\;=\;p,
     \label{eq:main-bound}\\
  \Prob_{x_0}[\st<\infty]\ &=\ 1. \label{eq:main-finite}
\end{align}
\end{theorem}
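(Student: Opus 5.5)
We plan to run the standard stopped-supermartingale argument. The key step is to show that the frozen process $Y_t:=V(x_{t\wedge\st})+\zeta\,(t\wedge\st)$ is a nonnegative supermartingale under $\Prob_{x_0}$. The control enters only at contacts, so the analysis can be split along the contact times $\tau_k$.

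\emph{Flow segments.} On each interval $[\tau_k,\tau_{k+1})\cap[0,\st)$ the state lies in $\Flow\setminus\Xtgt$ with $V<\lambda$. It\^o's formula applied to $V\in C^2$ (R1) then gives $\mathrm{d}V(x_t)=\gen V\,\mathrm{d}t+\nabla V^\top g\,\mathrm{d}W_t$. By \eqref{c4}, $Y$ has nonpositive drift on these intervals. The stochastic integral is only a local martingale, so we localize with exit times of large balls. We then pass to the limit using Fatou's lemma, which is legitimate because $Y\ge0$ by \eqref{c1}.

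\emph{Contacts.} At each contact with $\tau_k<\st$, the pre-jump state $x_k^-$ satisfies $V(x_k^-)<\lambda$ and $x_k^-\notin\Xtgt$, because otherwise $\st\le\tau_k$ with entering state $x_k^-$. Hence $x_k^-\in\Gamma_\lambda$. By the strong Markov property at $\tau_k$, compatibility \eqref{eq:compat}, and (R2), we get $\E[V(x_k^+)\mid\Filt_{\tau_k^-}]\le V(x_k^-)$.

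\emph{Infinite horizon.} Here we invoke Lem.~\ref{lem:non-zeno}, which follows from Asm.~\ref{asm:wellposed}. It gives $\tau_k\to\infty$ a.s., so concatenating the segments defines $Y$ on all of $[0,\infty)$. The supermartingale property for general $s<t$ follows by induction over the finitely many contacts in $(s,t]$, and optional stopping at $\tau_k\wedge t$ handles each piece.

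\emph{Bound \eqref{eq:main-finite}.} We have $\zeta\,\E[t\wedge\st]\le\E[Y_t]\le Y_0=V(x_0)\le1$ by \eqref{c2}. Letting $t\to\infty$ gives $\E_{x_0}[\st]\le 1/\zeta<\infty$, and hence $\st<\infty$ almost surely.

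\emph{Bound \eqref{eq:main-bound}.} We apply Ville's inequality to the nonnegative supermartingale $V(x_{t\wedge\st})$, which is dominated by $Y$. It gives $\Prob_{x_0}[\sup_t V(x_{t\wedge\st})\ge\lambda]\le V(x_0)/\lambda\le1/\lambda$. There is a subtlety when the entering state is a pre-jump state, i.e.\ $V(x_{\st^-})\ge\lambda$ at a contact. This cannot occur with $\xi\notin\Xtgt$ unless $V$ reached $\lambda$ continuously along the flow first, which the right-continuous path already records; alternatively we include left limits in the supremum, which Ville's inequality permits for c\`adl\`ag supermartingales. So with probability at least $1-1/\lambda$, $V<\lambda$ on $[0,\st)$ and at the entering state. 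Combined with $\st<\infty$, the entering state then lies in $\Xtgt$. Moreover, every $x_t$ with $t<\st$ has $V(x_t)<\lambda$ and so avoids $\Xuns$ by \eqref{c3}. The event in \eqref{eq:main-bound} therefore has probability at least $1-1/\lambda=p$.

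\emph{Main obstacle.} We expect the hardest step to be the careful gluing of the flow segments and jumps into a single supermartingale on the infinite horizon. This requires handling the localization of the It\^o integral together with possibly infinitely many contacts, using non-Zeno to rule out accumulation. It also requires handling the measurability of the post-jump law given $\Filt_{\tau_k^-}$. We would first prove $\E[Y_{\tau_{k}\wedge t}]\le Y_0$ by induction on $k$, then let $k\to\infty$ via non-Zeno and Fatou's lemma.
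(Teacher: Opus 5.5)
There is a genuine gap in how your process treats the jump at $\st$ itself. Your $Y_t=V(x_{t\wedge\st})+\zeta(t\wedge\st)$ evaluates $V$ at $x_\st$. When $\st$ is a contact time, $x_\st$ is the \emph{post}-jump state. You check the jump inequality only at contacts with $\tau_k<\st$, but your process also jumps at $\tau_k=\st$. When $\st$ is triggered by the pre-jump state ($x_k^-\in\Xtgt$, or $V(x_k^-)=\lambda$ reached as a left limit), $x_k^-\notin\Gamma_\lambda$. Then neither \eqref{eq:compat} nor \eqref{c5} constrains $\E[V(x_k^+)]$.

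This actually fails. Take the following system:
\begin{itemize}
\item $\Xspace=[0,\infty)$, $h(x)=x$, $\mathrm{d}x=-\mathrm{d}t+\sigma\,\mathrm{d}W$;
\item $\Xtgt=\Guard=\{0\}$, $\Xuns=[\lambda,\infty)$, $\Xini=(0,1]$;
\item $V(x)=x$, and $Q(\cdot\mid 0,u)$ the Dirac mass at $2\lambda$.
\end{itemize}
Every hypothesis of the theorem holds: \eqref{c4} with $\zeta=1$; $\Gamma_\lambda=\varnothing$, so compatibility is vacuous; and (B1)(i) holds with separation $2\lambda$. Yet for $t\ge\st$, $V(x_{t\wedge\st})$ equals $2\lambda$ if the guard is hit first and $\lambda$ otherwise. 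Hence $\sup_t V(x_{t\wedge\st})\ge\lambda$ almost surely, contradicting your Ville bound $V(x_0)/\lambda$. Also $\liminf_{t\to\infty}\E[Y_t]\ge\lambda>1\ge Y_0$. So the supermartingale claim behind both your derivations of \eqref{eq:main-finite} and \eqref{eq:main-bound} is false, even though the theorem holds here ($\xi=0\in\Xtgt$ on the first event).

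Your ``subtlety'' paragraph only asks whether a pre-jump crossing of $\lambda$ is seen by the supremum (it is). It does not address the uncontrolled jump that follows, and pre-jump entrance into $\Xtgt$ is not treated at all. Separately, being dominated by $Y$ does not make $V(x_{t\wedge\st})$ a supermartingale; that needs its own argument.

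The missing idea is to freeze the process at the \emph{entering state} rather than stop the path. Write $\Yproc$ for the process of \eqref{eq:frozen}, which for $t\ge\st$ equals $0$ if $\xi\in\Xtgt$ and $\lambda$ otherwise. At a pre-jump entrance, $\Yproc_\st\le V(x^-)=\Yproc_{\st^-}$ holds pointwise by \eqref{c1} and the definition of $\st$, with no reference to the kernel. At every contact with $x^-\in\Gamma_\lambda$ (survival, target hit, or overshoot), $\Yproc_\tau\le V(x^+)$ pointwise, so compatibility gives the conditional inequality.

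This choice also gives $\Yproc\in[0,\lambda]$, so bounded convergence replaces your localization and Fatou step when gluing across contacts. It also makes $\{\xi\notin\Xtgt\}\subseteq\{\sup_t\Yproc_t\ge\lambda\}$ immediate, because $\Yproc$ freezes at $\lambda$. Freezing at $V(\xi)$ instead would also work, with Fatou in place of bounded convergence.

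With $\Yproc_t+\zeta(t\wedge\st)$ in place of your $Y$, the rest of your outline is exactly the paper's route through Lemmas~\ref{lem:frozen} and~\ref{lem:settling}. That outline is: It\^o and \eqref{c4} on flow arcs, compatibility at contacts, non-Zeno gluing, Ville, and \eqref{c3} for sure avoidance before $\st$.
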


\begin{remark}[Pointwise form]\label{rem:pointwise}
In conditions \eqref{c1} and \eqref{c3}--\eqref{c5}, and compatibility,
%are invariant under replacing 
$(V,\lambda)$ can be replaced by $(V/c,\lambda/c)$ for
$c\in(0,\lambda)$, while \eqref{c2} then reads $V\le c$. Applying
Thm. \ref{thm:main} to the rescaled pair with $\Xini:=\{x_0\}$, and letting $c$ decrease to $V(x_0)$, gives, for every $x_0$ with
$V(x_0)<\lambda$,
$\Prob_{x_0}[\{\xi\in\Xtgt\}\wedge\{x_t\notin\Xuns,\ \forall t<\st\}]
\ge1-V(x_0)/\lambda$: the certificate is a pointwise bound, and
\eqref{c2} only normalizes it on $\Xini$.
\end{remark}

\begin{corollary}[Greedy shield]\label{cor:shield}
If for each $x^{-}\in\Gamma_\lambda$ the infimum in
\eqref{eq:intervention} is attained and
$u^{\star}(x^{-})\in\argmin_{u\in A(x^{-})}\E_{Q}[V(x^{+})\mid x^{-},u]$
defines a Borel map, then $u^{\star}$ is compatible with $V$ by
\eqref{c5}, and Theorem~\ref{thm:main} holds for the shielded loop
$u^{\star}$. More generally, if $b$ is \emph{any} Borel base proposal
and $\pi_c$ any compatible policy, the shield that keeps $b(x^{-})$
whenever $\E_Q[V(x^{+})\mid x^{-},b(x^{-})]\le V(x^{-})$ and falls back
to $\pi_c(x^{-})$ otherwise is again Borel and compatible: the
guarantee is inherited by arbitrary base policies at the price of
pointwise checks only.
\end{corollary}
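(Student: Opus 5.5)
The statement to prove is Corollary~\ref{cor:shield}. Both parts reduce to checking Definition~\ref{def:compat} (a Borel selection with the descent inequality \eqref{eq:compat} on $\Gamma_\lambda$) and then invoking Theorem~\ref{thm:main}.

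\emph{First part.} Fix $x^{-}\in\Gamma_\lambda$. Since the infimum in \eqref{eq:intervention} is attained at $u^{\star}(x^{-})$,
\[
\E_Q[V(x^{+})\mid x^{-},u^{\star}(x^{-})]=\intv V(x^{-}).
\]
By \eqref{c5} this is at most $V(x^{-})$, which is \eqref{eq:compat}. We also have $u^{\star}(x^{-})\in A(x^{-})$ by the definition of the argmin, and $u^{\star}$ is Borel by hypothesis. Off $\Gamma_\lambda$ the values of $u^{\star}$ are unconstrained by compatibility, so any Borel extension on $\Guard$ with values in $A$ will do. One admissible choice is $u^{\star}$ itself where it is defined, and a fixed compatible or admissible selection elsewhere. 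Then (R3) holds and $u^{\star}$ is compatible, and Theorem~\ref{thm:main} applies verbatim to the loop closed with $u^{\star}$.

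\emph{Second part.} Define the switching set
\[
S:=\{x^{-}\in\Guard:\ \E_Q[V(x^{+})\mid x^{-},b(x^{-})]\le V(x^{-})\},
\]
and the shield $\pi_s:=b$ on $S$ and $\pi_s:=\pi_c$ on $\Guard\setminus S$.

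\textbf{Measurability.} The map $(x^{-},u)\mapsto \E_Q[V(x^{+})\mid x^{-},u]$ is Borel. To see this, write $V=\lim_n \min(V,n)$ with $V\ge0$ by \eqref{c1}. For bounded measurable $\phi$, the map $(x^{-},u)\mapsto\int\phi\,dQ(\cdot\mid x^{-},u)$ is Borel by the defining property of a stochastic kernel (a monotone class argument from indicators of sets in $\mathcal B$). Monotone convergence then passes this to $V$, allowing the value $+\infty$ a priori; (R2) makes it finite. Composing with the Borel graph map $x^{-}\mapsto(x^{-},b(x^{-}))$ gives a Borel function of $x^{-}$. Since $V$ is continuous, $S$ is a Borel set. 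Consequently $\pi_s=b\,\ind{S}+\pi_c\,\ind{\Guard\setminus S}$ is Borel: preimages are $(b^{-1}(B)\cap S)\cup(\pi_c^{-1}(B)\setminus S)$. Admissibility $\pi_s(x^{-})\in A(x^{-})$ requires $b(x^{-})\in A(x^{-})$, which I take as part of $b$ being a proposal; $\pi_c$ is admissible by (R3).

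\textbf{Compatibility.} On $\Gamma_\lambda\cap S$ the inequality \eqref{eq:compat} holds by the definition of $S$. On $\Gamma_\lambda\setminus S$ it holds because $\pi_c$ is compatible. Hence $\pi_s$ is compatible, and Theorem~\ref{thm:main} yields \eqref{eq:main-bound}--\eqref{eq:main-finite} for the shielded loop.

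The only step needing care is the measurability of the check function $x^{-}\mapsto\E_Q[V(x^{+})\mid x^{-},b(x^{-})]$. It follows from the kernel's joint measurability in $(x^{-},u)$, extended from indicators to nonnegative $V$ by monotone convergence. Everything else is pointwise bookkeeping.
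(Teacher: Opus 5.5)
Your proposal is correct and is essentially the paper's argument: attainment of the infimum turns \eqref{c5} directly into \eqref{eq:compat} for $u^{\star}$, and the shield is compatible because its pointwise test certifies \eqref{eq:compat} wherever $b$ is kept, while $\pi_c$ covers the rest of $\Gamma_\lambda$. The paper leaves implicit the Borel measurability of $x^{-}\mapsto\E_Q[V(x^{+})\mid x^{-},b(x^{-})]$ and the admissibility $b(x^{-})\in A(x^{-})$, and your monotone-class argument and explicit admissibility assumption supply both correctly (your extension of $u^{\star}$ off $\Gamma_\lambda$ assumes an admissible Borel selection exists there, which is best read as part of the hypothesis that $u^{\star}$ defines a Borel policy on $\Guard$).
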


The proof of Theorem~\ref{thm:main} rests on three lemmas: executions
are non-Zeno (Lemma~\ref{lem:non-zeno}); a suitably frozen copy of
$V(x_t)$ is a bounded supermartingale (Lemma~\ref{lem:frozen}); and the
corridor is left in finite expected time (Lemma~\ref{lem:settling}).

\begin{lemma}[Non-Zeno]\label{lem:non-zeno}
Under Assumption~\ref{asm:wellposed}, set
\[
  \eta:=\frac{\delta}{2\amax},\;\;
  p_0:=1-\exp\!\Big(-\frac{\delta\amax}{4\sbar^{2}}\Big),\;\;
  \tmin:=\eta\,p_0 ,
\]
with $\Delta t_k:=\infty$ on $\{\tau_k=\infty\}$ by convention, then for all $k$,
\begin{equation}\label{eq:nonzeno}
  \Prob[\Delta t_k\ge\eta\mid\Filt_{\tau_k}]\ \ge\ p_0
  \text{ and }
  \E[\Delta t_k\mid\Filt_{\tau_k}]\ \ge\ \tmin .
\end{equation}
Consequently, $\sum_k\Delta t_k=\infty$ almost surely.% so contacts do not
% accumulate infinitely in finite time.
\end{lemma}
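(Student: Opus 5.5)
We reduce each dwell interval to a first-passage problem for the scalar process singled out in Remark~\ref{rem:asm-roles}: $s_t:=h(x_t)$ in case (i), and $s_t:=\gen h(x_t)$ in case (ii). On $\{\tau_k<\infty\}$, condition (B1) gives $s_{\tau_k}\ge\delta$. In case (ii) this is applied to the rate $\gen h$; since $\nabla h^\top g\equiv0$, the quantity $h(x_t)$ is $C^1$ along the excursion and cannot return to zero before $\gen h(x_t)$ does. In both cases, on $[\tau_k,\tau_{k+1})$ we have $\Delta t_k\ge\inf\{t\ge0: s_{\tau_k+t}\le0\}$.

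By It\^o's formula and the strong Markov property at $\tau_k$, we write $s_{\tau_k+t}=s_{\tau_k}+\int_0^t a_r\,dr+M_t$. Here $|a_r|\le\amax$ by (B2), and $M_t=\int_0^t b_r^\top dW'_r$ is a martingale with respect to the shifted filtration, with $\|b_r\|\le\sbar$. On $\{t\le\eta\}$ the drift contributes at least $-\amax\eta=-\delta/2$. Hence $\{\Delta t_k<\eta\}\subseteq\{\inf_{t\le\eta}M_t\le-\delta/2\}$.

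The key step is to bound this probability using the exponential supermartingale $\exp(\theta M_t-\tfrac{\theta^2}{2}\langle M\rangle_t)$ together with $\langle M\rangle_\eta\le\sbar^2\eta$. Ville's or Doob's maximal inequality then gives
\[
\Prob\big[\inf_{t\le\eta}M_t\le-\tfrac{\delta}{2}\,\big|\,\Filt_{\tau_k}\big]\le\exp\!\Big(-\tfrac{(\delta/2)^2}{2\sbar^2\eta}\Big)=\exp\!\Big(-\tfrac{\delta\amax}{4\sbar^2}\Big)
\]
after substituting $\eta=\delta/(2\amax)$. This yields $\Prob[\Delta t_k\ge\eta\mid\Filt_{\tau_k}]\ge p_0$. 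The main care is needed in conditioning on $\Filt_{\tau_k}$: the post-reset state is drawn from $Q$ under a measurable policy, so the flow restarted at $\tau_k$ is a strong solution driven by the shifted Brownian motion $W_{\tau_k+\cdot}-W_{\tau_k}$, which is independent of $\Filt_{\tau_k}$. On $\{\tau_k=\infty\}$ the bound is trivial by convention. Markov's inequality in reverse gives $\E[\Delta t_k\mid\Filt_{\tau_k}]\ge\eta\,\Prob[\Delta t_k\ge\eta\mid\Filt_{\tau_k}]\ge\tmin$.

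For the consequence, let $B_k:=\ind{\Delta t_k\ge\eta}$, so $\sum_k\Delta t_k\ge\eta\sum_kB_k$. Since $\Delta t_{k-1}$ is $\Filt_{\tau_k}$-measurable, $B_{k-1}$ is $\Filt_{\tau_k}$-measurable, while $\Prob[B_k=1\mid\Filt_{\tau_k}]\ge p_0$. The conditional (L\'evy) Borel--Cantelli lemma therefore gives $\sum_kB_k=\infty$ almost surely on the event that all $\tau_k<\infty$, because $\sum_k\Prob[B_k\mid\Filt_{\tau_k}]=\infty$. Hence $\sum_k\Delta t_k=\infty$ almost surely. On the complementary event some $\Delta t_k=\infty$, and the sum is again infinite. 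I expect the main obstacle to be rigorously justifying the case (ii) reduction: $s=\gen h$ starts above $\delta$, and we must show that $h$ stays positive until $s$ first hits zero. This follows because $\dot h=s>0$ on that interval and $h(x_{\tau_k})\ge0$.
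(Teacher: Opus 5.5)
Your proposal is correct and follows essentially the same route as the paper: reduction to the scalar $s$ (including the case (ii) argument that $h$ is $C^{1}$ and strictly increasing while $\gen h>0$), the It\^o decomposition with the (B2) bounds, the exponential-martingale inequality with $\langle M\rangle_\eta\le\sbar^{2}\eta$, Markov's inequality for the conditional mean, and the conditional Borel--Cantelli lemma with $\{\Delta t_k\ge\eta\}\in\Filt_{\tau_{k+1}}$. The one detail to tighten is to stop $M$ at the first zero of $s$, as the paper does: past that time (in case (i)) or past the next contact (in case (ii)), the It\^o decomposition along the actual process breaks at the jump, so (B2) and the bound on $\langle M\rangle$ should only be invoked up to that stopping time.
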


\noindent Intuitively, Lemma 1 ensures that every finite time window contains finitely many
contacts. 

\begin{proof}
Define $s_t:=h(x_t)$, and abbreviate the quantity %separated 
in (B1) as
$\sep_t:=s_t$ in case (i) and $\sep_t:=\dot s_t=\gen h(x_t)$ in case
(ii), so that $\sep_{\tau_k^{+}}\ge\delta$ in both, enlarging the
constants, $\amax, \sbar>0$ of (B2) if necessary. %, $\amax,\sbar>0$. 
Fix $k$; on
$\{\tau_k=\infty\}$ both bounds hold trivially, so focus on
$\{\tau_k<\infty\}$ and argue conditionally on $\Filt_{\tau_k}$, under
which the excursion is the strong solution of \eqref{eq:shs} started at
$x^{+}_k$ and driven by the increments of $W$ after the stopping time
$\tau_k$. It is globally defined between contacts because $f,g$ are
Lipschitz with linear growth.

\emph{Step 1 (reduction to $\sep$).} The excursion cannot end before
$\sep$ first vanishes: in case (i), %because 
it ends exactly when
$s=0=\sep$; in case (ii), $\nabla h^{\top}g\equiv0$ removes the
martingale component of $s$, so $s$ is $C^{1}$ with $\dot s=\sep$ and,
starting from $s_{\tau_k^{+}}\ge0$, increases strictly while $\sep>0$.
Hence, with $\sigma_k:=\inf\{t>\tau_k:\sep_t=0\}$,
$\Delta t_k\ge\sigma_k-\tau_k$, and $\sep_t>0$ on $[\tau_k,\sigma_k)$.

\emph{Step 2 (uniform stochastic bound).} Since $h\in C^{4}$ and
$f,g\in C^{2}$, $\sep$ is a $C^{2}$ function of the state in both
cases, and It\^o's formula gives
$\mathrm{d}\sep_t=a(x_t)\,\mathrm{d}t+G(x_t)\,\mathrm{d}W_t$ along the
excursion, where $(a,G)=(\gen h,\ \nabla h^{\top}g)$ in case (i) and
$(\gen(\gen h),\ \nabla(\gen h)^{\top}g)$ in case (ii), so
$|a|\le\amax$ and $\|G\|\le\sbar$ by (B2). Let
\[
  M_t:=\int_0^{t\wedge(\sigma_k-\tau_k)}
        G(x_{\tau_k+r})\,\mathrm{d}W_r,\qquad t\ge0,
\]
the martingale part of $\sep$ stopped at $\sigma_k$. $M_t$ is a continuous
martingale for the filtration $(\Filt_{\tau_k+t})_{t\ge0}$ with
$\langle M\rangle_t\le\sbar^{2}t$. On $\{\sigma_k-\tau_k\le\eta\}$,
\begin{align*}
  0&=\sep_{\sigma_k}
 =\sep_{\tau_k^{+}}+\int_{\tau_k}^{\sigma_k}a(x_r)\,\mathrm{d}r
     +M_{\sigma_k-\tau_k}
   \\ & \ge\ \delta-\amax\eta+M_{\sigma_k-\tau_k}
   =\tfrac{\delta}{2}+M_{\sigma_k-\tau_k},
\end{align*}
so $\inf_{t\le\eta}M_t\le-\delta/2$ on that event. The exponential
inequality for continuous local martingales
\cite[Ch.~IV, Ex.~3.16]{revuz1999continuous}, applied to $-M$ with
$\langle M\rangle_\eta\le\sbar^{2}\eta$, yields
\begin{equation*}
\begin{aligned}
  \Prob[\sigma_k-\tau_k\le\eta\mid\Filt_{\tau_k}]
  \ & \le\ \exp\!\Big(-\frac{(\delta/2)^{2}}{2\sbar^{2}\eta}\Big)
  \\& =\ \exp\!\Big(-\frac{\delta\amax}{4\sbar^{2}}\Big)
  \ =\ 1-p_0 .
\end{aligned}
\end{equation*}
Since $\{\sigma_k-\tau_k>\eta\}\subseteq\{\Delta t_k\ge\eta\}$ by
Step~1, the first bound in \eqref{eq:nonzeno} follows.

\emph{Step 3 (expectation bound).} By Markov's inequality,
$\E[\Delta t_k\mid\Filt_{\tau_k}]
 \ge\eta\,\Prob[\Delta t_k\ge\eta\mid\Filt_{\tau_k}]\ge\eta p_0=\tmin$.

\emph{Step 4 (Zeno excluded).} If only finitely many contacts occur,
the last excursion never ends and some $\Delta t_k=\infty$. Otherwise
$\tau_k<\infty$ for all $k$, the events $A_k:=\{\Delta t_k\ge\eta\}$
satisfy $A_k\in\Filt_{\tau_{k+1}}$ and
$\sum_k\Prob[A_k\mid\Filt_{\tau_k}]\ge\sum_k p_0=\infty$, so the
conditional Borel--Cantelli lemma
\cite[Sec.~12.15]{williams1991probability} gives $A_k$ infinitely often
almost surely, hence $\sum_k\Delta t_k=\infty$.
\end{proof}

Define the frozen process
\begin{equation}\label{eq:frozen}
  \Yproc_t:=
  \begin{cases}
    V(x_t), & t<\st,\\
    0, & t\ge\st \text{ and } \xi\in\Xtgt,\\
    \lambda, & \text{otherwise:}
  \end{cases}
\end{equation}
$Y$ follows $V$ until the corridor is left, then freezes at $0$ (target
reached first) or at $\lambda$ (level crossed first).

\begin{lemma}[Frozen supermartingale]\label{lem:frozen}
Under Asm. ~\ref{asm:wellposed}--\ref{asm:regularity}, if $V$
satisfies \eqref{c1}, \eqref{c4} and $\pi$ is compatible with $V$, 
then
$(\Yproc_t)_{t\ge0}$ is a nonnegative $(\Filt_t)$-supermartingale with
$\Yproc_t\in[0,\lambda]$  $\forall t$.
\end{lemma}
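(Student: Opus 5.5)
We prove that $\Yproc$ is a supermartingale by showing $\E[\Yproc_{t}\mid\Filt_s]\le \Yproc_s$ for $s\le t$. We build this up excursion by excursion between contact times, using Lemma~\ref{lem:non-zeno} to pass to the limit over contacts.

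Boundedness comes first. For $t<\st$, the state has not entered $\{V\ge\lambda\}$, so $V(x_t)<\lambda$, and \eqref{c1} gives $V(x_t)\ge0$. After $\st$, $\Yproc$ is frozen at $0$ or $\lambda$. Hence $\Yproc_t\in[0,\lambda]$, and $\Yproc$ is adapted because $\st$ is a stopping time and $\xi$ is $\Filt_{\st}$-measurable. Boundedness makes every expectation below finite, so no localisation of integrability is needed beyond stopping.

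\emph{Flow step.} Fix $k$ and work on the random interval $[\tau_k,\tau_{k+1})$, stopped at $\st$. On this interval the state flows in $\Flow$. Before $\st$ it stays in $\{V<\lambda\}\setminus\Xtgt$, where \eqref{c4} gives $\gen V\le-\zeta\le0$. Apply It\^o's formula to $V(x_t)$, which is valid by (R1), up to $\rho_n:=\tau_{k+1}\wedge\st\wedge\inf\{t:|x_t|\ge n\}$. Localising by $|x_t|\le n$ makes $\nabla V^{\top}g$ bounded, so the stochastic integral is a true martingale. The drift term is nonpositive, so $V(x_{\cdot\wedge\rho_n})$ is a supermartingale on the excursion. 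Letting $n\to\infty$ uses $V<\lambda$ before $\st$, so the stopped values are bounded, together with bounded convergence.

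At the endpoint $\st$ reached during flow, the frozen value must not exceed the pre-exit limit. If $\xi\in\Xtgt$, then $\Yproc=0\le V$. If $\xi\in\{V\ge\lambda\}$, then $\Yproc=\lambda$; by continuity of the flow path, $V(x_{\st})=\lambda$ on hitting the boundary, so $\Yproc_{\st}=\lambda=\lim_{t\uparrow\st}V(x_t)$. Together these show the frozen process is a supermartingale across the flow portion.

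\emph{Jump step.} Suppose $\tau_{k+1}<\st$, or $\tau_{k+1}=\st$ with the pre-jump state $x^-\in\Gamma_\lambda$, i.e.\ the entrance occurs through the post-jump state. Then the pre-jump value is $V(x^-)<\lambda$ with $x^-\notin\Xtgt$, so \eqref{eq:compat} gives $\E[V(x^+)\mid\Filt_{\tau_{k+1}^-}]\le V(x^-)$. The frozen value after the jump is $\Yproc_{\tau_{k+1}}\le V(x^+)$ whenever $V(x^+)\ge\lambda$ (since then it is $\lambda$) or $x^+\in\Xtgt$ (since then it is $0$). Hence $\E[\Yproc_{\tau_{k+1}}\mid\Filt_{\tau_{k+1}^-}]\le \Yproc_{\tau_{k+1}^-}$. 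If instead the pre-jump state already lies in $\Xtgt\cup\{V\ge\lambda\}$, the process was frozen at $\tau_{k+1}^-$ and nothing changes. Condition (R2) guarantees these conditional expectations exist.

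\emph{Concatenation and limit.} The strong Markov property at the stopping times $\tau_k$ lets us chain the flow and jump steps. This gives $\E[\Yproc_{t\wedge\tau_n}\mid\Filt_s]\le \Yproc_{s\wedge\tau_n}$ for each $n$, by induction on $n$ via the tower property. By Lemma~\ref{lem:non-zeno}, $\tau_n\to\infty$ almost surely, so $t\wedge\tau_n\to t$. Since $\Yproc$ is bounded, bounded convergence yields $\E[\Yproc_t\mid\Filt_s]\le\Yproc_s$. The right-continuity of $\Yproc$ needed to identify the limit holds because $x_t$ is right-continuous.

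\emph{Main obstacle.} The hard part is the jump-step bookkeeping: the exact value of the frozen process at $\st$ when the entrance is via a post-jump state versus a pre-jump state. In particular, one must check that freezing at $\lambda$ never exceeds $V(x^+)$, and that conditioning on $\Filt_{\tau_{k+1}^-}$ correctly feeds the kernel expectation into the filtration. I would handle this by writing $\Yproc_{\tau_{k+1}}\le V(x^+)\ind{x^+\notin\Xtgt}$ on $\{x^-\in\Gamma_\lambda\}$ and treating the two cases explicitly.
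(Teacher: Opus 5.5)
Your proposal is correct and follows essentially the same route as the paper. On each flow arc you combine It\^o's formula with \eqref{c4}; at each contact with pre-jump state in $\Gamma_\lambda$ you combine the pointwise bound $\Yproc_\tau\le V(x^{+})$ with \eqref{eq:compat}; you splice across contacts with the tower property; and you use Lemma~\ref{lem:non-zeno} with bounded convergence to remove the stopping at $\tau_n$. The only minor difference is technical: you make the stochastic integral a true martingale by localizing on balls and then pass to the limit by bounded convergence, whereas the paper observes that it is a local martingale bounded below by $-\lambda$ and hence already a supermartingale.
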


\begin{proof}
$\st$ is a stopping time (debut theorem), and
$\Yproc_t\in[0,\lambda]$ by \eqref{c1} and the definition of $\st$; this
boundedness is the sole source of integrability used below. On a
jump-free arc $[a,b]$ of an excursion with $b<\st$, It\^o's formula
gives $V(x_b)=V(x_a)+\int_a^b\gen V(x_r)\,\mathrm{d}r+N_b$ with
$N_t:=\int_a^t\nabla V^{\top}g\,\mathrm{d}W_r$. Along the arc
$V\in[0,\lambda)$ and $\gen V\le-\zeta<0$ by \eqref{c4}, so
$N_t\ge-\lambda$; a local martingale bounded below is a supermartingale,
hence $\E[N_b|\Filt_a]\le0$ and $\E[V(x_b)|\Filt_a]\le V(x_a)$.
At a contact $\tau\le\st$, if $x^{-}$ itself enters
$\Xtgt\cup\{V\ge\lambda\}$ then $\tau=\st$ and the frozen value
$\Yproc_\tau\in\{0,\lambda\}$ satisfies
$\Yproc_\tau\le V(x^{-})=\Yproc_{\tau^{-}}$ trivially. Otherwise $x^{-}\in\Gamma_\lambda$, and in each post-jump case
($V(x^{+})<\lambda$ survival; $x^{+}\in\Xtgt$; overshoot)
$\Yproc_\tau\le V(x^{+})$ pointwise, so, the reset randomness being
independent of $\Filt_{\tau^{-}}$,
\[
  \E[\Yproc_\tau\mid\Filt_{\tau^{-}}]
  \le \E_{Q}\big[V(x^{+})\mid x^{-},\pi(x^{-})\big]
  \le V(x^{-})=\Yproc_{\tau^{-}}
\]
by \eqref{eq:compat}. By Lemma~\ref{lem:non-zeno}, $[s,t]$ contains
finitely many contacts a.s. Applying the two estimates alternately,
through the tower property, to the process stopped at the $n$-th
contact gives $\E[\Yproc_{t\wedge\tau_n\wedge\st}\mid\Filt_s]\le\Yproc_s$
for every $n$; letting $n\to\infty$ (bounded convergence,
$\Yproc\in[0,\lambda]$) yields
$\E[\Yproc_{t\wedge\st}\mid\Filt_s]\le\Yproc_s$ and, since $\Yproc$ is
constant after $\st$, the claim follows.
\end{proof}

\begin{lemma}[Finite settling]\label{lem:settling}
Under Asm.~\ref{asm:wellposed}--\ref{asm:regularity}, if $V$ satisfies \eqref{c1}, \eqref{c4} and $\pi$ is compatible with $V$, then
\begin{equation}\label{eq:settling}
  \E[\st]\ \le\ \frac{V(x_0)}{\zeta}\ <\ \infty,
  \;\;\text{hence}\;\;
  \Prob[\st<\infty]=1 .
\end{equation}
\end{lemma}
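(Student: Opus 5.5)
We will prove the settling bound with a Dynkin-type argument applied to the compensated process $V(x_t)+\zeta t$, stopped at $\st$. The certificate gives a drift of at most $-\zeta$ along the flow, and the compatible policy gives no expected increase across resets. Together these make
\[
  Z_t:=\Yproc_{t\wedge\st}+\zeta\,(t\wedge\st)
\]
a nonnegative supermartingale, and $\E[\st]\le V(x_0)/\zeta$ then follows from monotone convergence.

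\emph{Step 1: flow arcs.} We reuse the machinery in the proof of Lemma~\ref{lem:frozen}. On a jump-free arc $[a,b]$ with $b<\st$, It\^o's formula and \eqref{c4} give
\[
V(x_b)+\zeta b\le V(x_a)+\zeta a+N_b-N_a,
\]
where $N$ is the local martingale $\int\nabla V^{\top}g\,\mathrm{d}W$. Since $V\in[0,\lambda)$ on the arc, $N_b-N_a\ge-\lambda-\zeta(b-a)$. This lower bound is deterministic once $b-a$ is bounded, so $N$ stopped at a bounded time is a local martingale bounded below, hence a supermartingale. Therefore $\E[V(x_b)+\zeta b\mid\Filt_a]\le V(x_a)+\zeta a$.

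\emph{Step 2: resets.} At a contact $\tau\le\st$ the time component of $Z$ is continuous. The argument of Lemma~\ref{lem:frozen} then gives $\E[Z_\tau\mid\Filt_{\tau^-}]\le Z_{\tau^-}$: either the pre-jump state already settles the run and $Y$ is frozen at $0$ or $\lambda$, or compatibility \eqref{eq:compat} applies with $Y_\tau\le V(x^+)$. At $\st$ itself we freeze with $Y_\st\in\{0,\lambda\}$; in the overshoot case $Y_\st=\lambda\le V(\xi)$, which preserves the inequality.

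\emph{Step 3: chaining.} We chain Steps 1 and 2 via the tower property over the contacts $\tau_n$. Lemma~\ref{lem:non-zeno} guarantees that only finitely many contacts occur in $[0,t]$ almost surely. This yields $\E[Z_{t\wedge\tau_n}]\le Z_0=V(x_0)$ for each $n$. Letting $n\to\infty$ uses bounded convergence for $Y\in[0,\lambda]$ and monotone convergence for the time term, giving $\zeta\,\E[t\wedge\st]\le\E[Z_t]\le V(x_0)$, since $Y\ge0$ by \eqref{c1}. Letting $t\to\infty$ by monotone convergence gives $\E[\st]\le V(x_0)/\zeta$. This is finite because $V(x_0)$ is a finite real and, for $x_0$ in the corridor, $V(x_0)<\lambda$; if $x_0$ is not in the corridor, $\st=0$ and the bound is trivial. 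Finiteness of $\E[\st]$ then implies $\Prob[\st<\infty]=1$.

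\emph{Main obstacle.} The delicate point is integrability of the compensated process, since $\zeta t$ is unbounded and the boundedness argument of Lemma~\ref{lem:frozen} does not apply verbatim. We handle this by working at fixed finite $t$ throughout, where $Z\le\lambda+\zeta t$ is bounded, and only afterwards letting $t\to\infty$ monotonically.
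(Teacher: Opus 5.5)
Your proposal is correct and follows essentially the same route as the paper. The paper also works with $\Yproc_t+\zeta(t\wedge\st)$, which equals your $Z_t$ since $\Yproc$ is constant after $\st$: it controls flow arcs through \eqref{c4} and the bounded-below stochastic integral, uses compatibility at contacts, splices as in Lemma~\ref{lem:frozen} at fixed $t$ where the process lies in $[0,\lambda+\zeta t]$, and concludes by monotone convergence. One small point: on a flow arc, \eqref{c4} actually gives $N_b-N_a\ge-\lambda+\zeta(b-a)\ge-\lambda$, so the lower bound is uniform without bounding $b-a$; your weaker bound is still true and harmless.
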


\begin{proof}
Let $M_t:=\Yproc_t+\zeta\,(t\wedge\st)$, a process with values in
$[0,\lambda+\zeta t]$. On flow arcs inside the corridor, \eqref{c4}
gives $M_b-M_a=\int_a^b(\gen V+\zeta)\,\mathrm{d}r+N_b\le N_b$ with $N$
the stochastic integral of Lem. \ref{lem:frozen}, 
%the previous proof, 
so
$\E[M_b-M_a\mid\Filt_a]\le0$; at a contact the clock term
is unchanged
and $\Yproc$ does not increase in conditional expectation. %, as shown
% there. 
Splicing as in Lem.~\ref{lem:frozen} gives
$\E[M_{t}]\le M_0\le V(x_0)$, that is,
$\zeta\,\E[t\wedge\st]\le V(x_0)-\E[\Yproc_t]\le V(x_0)$ by \eqref{c1};
monotone convergence as $t\to\infty$ gives \eqref{eq:settling}, and a
nonnegative variable with finite mean is finite a.s.
\end{proof}

\begin{corollary}[Contact budget]\label{cor:budget}
Under the hypotheses of Lemma~\ref{lem:settling}, with $\eta,p_0$ as
in Lemma~\ref{lem:non-zeno} and $\jumps$ the number of contacts before
$\st$,
\[
  \E[\jumps]\ \le\ \frac{\E[\st]}{p_0\eta}+\frac1{p_0}
  \ \le\ \frac{2\amax V(x_0)}{\zeta\delta p_0}+\frac1{p_0}.
\]
Moreover, if \eqref{eq:compat} holds with a margin $\zeta_d > 0 $
($\E_Q[V(x^{+})\mid x^{-},\pi(x^{-})]\le V(x^{-})-\zeta_d$ on
$\Gamma_\lambda$), then 
$\E[\jumps]\le V(x_0)/\zeta_d$.
\end{corollary}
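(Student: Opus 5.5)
The two bounds come from two separate counting arguments. The first bound uses Lemma~\ref{lem:non-zeno}, and the second uses the frozen supermartingale of Lemma~\ref{lem:frozen} with a jump counter added.

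\emph{First bound.} Write $\jumps=\sum_{k\ge1}\ind{\tau_k<\st}$, counting contacts with $\tau_k\le\st$ if the contact at $\st$ is to be included; this changes the count by at most one and is absorbed by the additive constant. Define
\[
  S_n:=\sum_{k=1}^{n}\Bigl(\ind{\Delta t_k\ge\eta}-p_0\Bigr)\ind{\tau_k<\st}.
\]
Each summand is $\Filt_{\tau_{k+1}}$-measurable. Since $\{\tau_k<\st\}\in\Filt_{\tau_k}$, its conditional expectation given $\Filt_{\tau_k}$ is $\ge0$ by \eqref{eq:nonzeno}, so $S_n$ is a submartingale in the discrete filtration $(\Filt_{\tau_{k+1}})_k$. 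Taking expectations gives
\[
  p_0\,\E[\jumps\wedge n]\le\E\Bigl[\sum_{k\le n}\ind{\Delta t_k\ge\eta,\ \tau_k<\st}\Bigr].
\]
Pathwise, every dwell of length $\ge\eta$ that starts before $\st$ either ends by $\st$ or is the single interval straddling $\st$. The disjoint dwells that end by $\st$ have total length at most $\st$, so the right-hand count is at most $\st/\eta+1$. Monotone convergence in $n$ then gives $\E[\jumps]\le\E[\st]/(p_0\eta)+1/p_0$. Inserting $\E[\st]\le V(x_0)/\zeta$ from Lemma~\ref{lem:settling} and $\eta=\delta/(2\amax)$ gives the second expression.

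\emph{Second bound (margin).} I mirror the proof of Lemma~\ref{lem:settling}, replacing the clock with a jump counter. Let $n_t$ be the number of contacts in $(0,t\wedge\st)$ whose pre-jump state lies in $\Gamma_\lambda$; every contact before $\st$ is of this type. Set $M_t:=\Yproc_t+\zeta_d n_t$.
\begin{itemize}[leftmargin=1.5em]
  \item On flow arcs $n$ is constant, and $\Yproc$ is a supermartingale there, by the local-martingale-bounded-below argument of Lemma~\ref{lem:frozen}.
  \item At a contact with $x^-\in\Gamma_\lambda$, we have $\E[\Yproc_\tau\mid\Filt_{\tau^-}]\le V(x^-)-\zeta_d$. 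This offsets the $+\zeta_d$ jump in $\zeta_d n$.
\end{itemize}
Splicing over contacts as in Lemma~\ref{lem:frozen}, which uses the non-Zeno property, gives $\E[M_{t\wedge\tau_m}]\le V(x_0)$. Hence $\zeta_d\E[n_{t\wedge\tau_m}]\le V(x_0)$ by \eqref{c1}. Letting $m,t\to\infty$ by monotone convergence, and using $\st<\infty$ a.s., gives $\E[\jumps]\le V(x_0)/\zeta_d$.

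\emph{Main obstacle.} The delicate point is the bookkeeping at $\st$. A contact occurring exactly at $\st$ is either counted or not, and the convention has to be fixed consistently. In the margin bound, the contact at $\st$ itself has $x^-\in\Gamma_\lambda$ only if $x^-$ has not yet left the corridor. If it has, the decrease is trivial because $\Yproc_\tau\le V(x^-)$, but no $\zeta_d$ margin is available; I therefore exclude such contacts from $n_t$. For the first bound the corresponding subtlety is the straddling interval, which is exactly what produces the $+1/p_0$ term.
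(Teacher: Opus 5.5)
Your proposal is correct and follows the paper's proof essentially step for step. For the first bound, the paper bounds the expected number of short dwells preceding $\st$ by $(1-p_0)\E[\jumps]$, which is your submartingale inequality for the complementary indicator, and it uses the same pathwise bound $\st/\eta+1$ on long dwells and the same truncation $\jumps\wedge n$; for the margin bound it uses exactly your process $\Yproc_t+\zeta_d j_t$, with $j_t$ counting contacts in $(0,t\wedge\st)$, as the supermartingale of Lemma~\ref{lem:settling} with the clock replaced by the contact count.
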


\begin{proof}
Pathwise, at most $\st/\eta+1$ of the dwell intervals preceding $\st$
have $\Delta t_k\ge\eta$ (the last may straddle $\st$); the
expected number of the remaining dwell intervals is at most $(1-p_0)\,\E[\jumps]$
by \eqref{eq:nonzeno}, since $\ind{\tau_k<\st}$ is
$\Filt_{\tau_k}$-measurable. Hence,
$\E[\jumps]\le\E[\st]/\eta+1+(1-p_0)\E[\jumps]$; applied to
$\jumps\wedge n$ and letting $n\to\infty$, this gives the first bound,
and $\eta=\delta/2\amax$ with \eqref{eq:settling} the second. For the
last claim, $\Yproc_t+\zeta_d\,j_t$, with $j_t$ the number of contacts
in $(0,t\wedge\st)$, is a supermartingale as in
Lem.~\ref{lem:settling} with the clock replaced by contact count,
so $\zeta_d\,\E[j_t]\le V(x_0)$; let $t\to\infty$.
\end{proof}

\subsection{Proof of the Main Theorem}\label{sec:mainproof}

\begin{proof}[Proof of Theorem~\ref{thm:main}]
\emph{Step 1 (level crossing).} By Lemma~\ref{lem:frozen}, $(\Yproc_t)$
is a nonnegative supermartingale with
$\E[\Yproc_0]\le V(x_0)\le1$ by \eqref{c2}, %($\Yproc_0=0$ if $x_0\in\Xtgt$ and $\Yproc_0=V(x_0)$ otherwise), 
so Ville's maximal inequality
gives
\begin{equation}\label{eq:ville}
  \Prob_{x_0}\Big[\sup_{t\ge0}\Yproc_t\ge\lambda\Big]
  \ \le\ \frac{\E[\Yproc_0]}{\lambda}\ \le\ \frac1\lambda .
\end{equation}
\emph{Step 2 (pre-$\st$ safety is sure).} If $x_t\in\Xuns$ for some
$t<\st$ then $V(x_t)\ge\lambda$ by \eqref{c3}, contradicting $t<\st$;
hence $\{x_t\notin\Xuns,\ \forall t<\st\}$ holds $\Prob_{x_0}$-a.s.
\emph{Step 3 (reach).} By Lem.~\ref{lem:settling}, $\st<\infty$ a.s.,
and the entering state $\xi$ is well defined:  $\Xtgt\cup\{V\ge\lambda\}$
is closed, so an entrance along a flow arc is attained, while at a
contact one of $x^{-},x^{+}$ enters by definition. On
$\{\xi\notin\Xtgt\}$ one has $V(\xi)\ge\lambda$, hence $\Yproc$ freezes
at $\lambda$ and
$\{\xi\notin\Xtgt\}\subseteq\{\sup_t\Yproc_t\ge\lambda\}$; by
\eqref{eq:ville}, $\Prob_{x_0}[\xi\in\Xtgt]\ge1-1/\lambda$.
\emph{Step 4 (conclusion).} Intersecting the reach event of Step~3 with
the sure avoid event of Step~2 leaves the probability unchanged, which
is \eqref{eq:main-bound}; \eqref{eq:main-finite} is
Lemma~\ref{lem:settling}.
\end{proof}

\begin{remark}[Compatibility on an invariant guard section]
\label{rem:kpi}
The proof invokes \eqref{eq:compat} only at $x^{-}_k$. 
Hence \eqref{eq:compat} (and \eqref{c5}) may be
imposed on $\Gamma_\lambda\cap G$ only, where $G\subseteq\Guard$ is
a Borel guard section for which one additionally proves that the
contact chain never leaves $G$ before $\st$, a.s.;
Theorem~\ref{thm:main} is then unchanged. If the chain can leave $G$
with probability at most $\varepsilon$, stopping at the first exit
and counting it as a failure weakens \eqref{eq:main-bound} to
$1-1/\lambda-\varepsilon$. The same localization applies to the flow
condition: if \eqref{c4} is verified only on the corridor intersected
with a closed set $K\subseteq\Xspace$, stop the execution at the first
exit from $K$ and count it as a failure; then \eqref{eq:main-bound}
holds with its right side lowered by
$\Prob_{x_0}[\text{exit from }K\text{ before }\st]$.
\end{remark}

\section{Experimental Results}\label{sec:results}

Certificates are approximately computable because every condition of
Def.~\ref{def:certificate} is linear in $V$. Fix a feature map
$\varphi=(\varphi_1,\dots,\varphi_m)^{\top}$ with each
$\varphi_j\in C^{2}$, write $V(x)=\varphi(x)^{\top}\theta$ with
unknown coefficients $\theta\in\R^{m}$, and place finite grids
$\hat{\mathcal X}\subset\Xspace$, $\hat{\mathcal X}_{0}\subset\Xini$,
$\hat{\mathcal X}_{u}\subset\Xuns$, and $\hat{\mathcal X}_{F}$,
$\hat\Gamma$ in the domains of \eqref{c4} and \eqref{c5}. For a %prescribed 
rate $\zeta>0$ and a
declared policy $\pi$, a certificate search is performed via the linear program
\begin{align}
\min_{\theta\in\R^{m}}\ & J(\theta) \label{eq:lp} \\ \text{s.t.}\ \
&\varphi(x)^{\top}\theta\ge0,\ x\in\hat{\mathcal X};\quad
 \varphi(x)^{\top}\theta\le1,\ x\in\hat{\mathcal X}_{0}; \nonumber \\
&\varphi(x)^{\top}\theta\ge\lambda,\ x\in\hat{\mathcal X}_{u};\
 (\gen\varphi)(x)^{\top}\theta\le-\zeta,\ x\in\hat{\mathcal X}_{F}; \nonumber \\
&\big(\E_{Q}[\varphi(x^{+})\mid x,\pi(x)]-\varphi(x)\big)^{\top}
 \theta\le0,\ x\in\hat\Gamma, \nonumber
\end{align}
with some LP-representable objective $J$.The rows of \eqref{eq:lp} require only the numbers $\varphi(x)$, $(\gen\varphi)(x)$ and $\E_{Q}[\varphi(x^{+})]$, so any $C^{2}$
dictionary and any reset kernel with computable feature means
qualify. A feasible $\theta$, however, is a candidate rather than a
certificate, because its constraints hold only at the grid points.
Correctness therefore rests on an independent verifier: it encloses $V$, $\gen V$ and $\E_Q[V(x^{+})]$ on boxes covering the corridor and the guard (Bernstein coefficients for polynomial blocks, interval
extensions for the remaining features) and bisects until every
condition is decided with a positive margin; boxes that resist are
reported. Any dictionary of interval-computable
features qualifies, and the number of boxes grows with the state dimension, not degree. Violations return to \eqref{eq:lp} as new constraints. 

The examples verify this pipeline in three ways. Ex.~1 solves
\eqref{eq:lp} over a feature class adapted to the operator and
measures the conservatism against an exact PDE ground
truth. Ex.~2 solves it over a generic polynomial basis and
re-proves the verified candidate in exact rational arithmetic.
Ex.~3 performs no synthesis: its certificate is closed form, the
verifier proves its flow condition, and its jump condition is verified
by quadrature on a grid. When dynamics, sets, and features are all polynomial, \eqref{c1}--\eqref{c5} can
instead be imposed exactly by sum-of-squares programming
\cite{prajna2007framework,santoyo2021barrier}, without grids or a
verifier; the grid LP gives up between-grid exactness in exchange for
a larger dictionary, on which Ex.~1 (logarithms, multipoles) and
Ex.~3 (square roots) rely.

\subsection{Example 1: Escape from a Debris Field}\label{sec:corridor}

\emph{System.} A probe drifts across a debris field
$\Xspace=[-3,3]\times[-2.2,2.2]$ under a weak current and a dominant
diffusion,
\begin{equation}\label{eq:corr-sde}
  \mathrm dx=f\,\mathrm dt+\sigma\,\mathrm dW,\qquad
  f=(0.12,\,0)^{\top},\quad \sigma=0.55.
\end{equation}
The task is to reach the outermost strip, of width $0.01$, along the right, top, or bottom edge without touching any of three debris discs (Fig.~\ref{fig:hopper}):
% \begin{equation}\label{eq:corr-sets}
  $\Xtgt=\{x_1\ge2.99\}\cup\{|x_2|\ge2.19\},
\Xuns=\textstyle\bigcup_{i=1}^{3}B(c_i,r_i).$
% \end{equation}
The left edge is the guard, $\Guard=\{x_1=-3\}\setminus\Xtgt$.
A probe swept out upstream is re-inserted by a recovery system whose
kernel is
\begin{equation}\label{eq:corr-reset}
  Q(\cdot\mid x^{-})
  =0.9\;\mathrm{Unif}\big(\{2.05\}\times[-1.8,1.8]\big)
  +0.1\,\delta_{c_1},
\end{equation}
so recovery fails with probability $0.1$ and leaves the probe at the
center of the first disc, inside $\Xuns$. Assumption~\ref{asm:wellposed}
holds in case (i) with $h=x_1+3$: both branches of \eqref{eq:corr-reset}
land at $h\ge\delta=2.05$, and $\gen h=f_1$,
$\|\nabla h^{\top}g\|=\sigma$ give (B2). 

\emph{Ground truth.} We take $\lambda=10$ and claim level
$\bar p=0.8$. Theorem~\ref{thm:main} in its pointwise form (Remark~\ref{rem:pointwise}) certifies the region
$S_{\rm cert}:=\{V\le\lambda(1-\bar p)\}=\{V\le2\}$, and soundness
places it inside $S_{\rm true}:=\{p\ge\bar p\}$, where $p(x)$ is the
reach-avoid probability from $x$. Conservatism is then the area
between the two boundaries, and it is computable here because $p$
is. This $p$ and the mean settling time $T$ satisfy $\gen p=0$ and
$\gen T=-1$, with $p=0$ on $\Xuns$, $p=1$ on $\Xtgt$, $T=0$ on both
\cite{oksendal2003sde}, and with the nonlocal guard condition
$p|_{\Guard}=\E[p(x^{+})]$, likewise for $T$. The guard value is a
single scalar $q$, since $Q$ does not depend on $x^{-}$, and the
interior solution is affine in $q$; two sparse solves on a
$421\times311$ grid therefore give $\E[p(x^{+})]$ as an affine
function of $q$, and the guard condition becomes one linear equation
for $q$.

\emph{Synthesis and verification.} We solve \eqref{eq:lp} at rate
$\zeta=10^{-2}$ with $m=211$ features: a Chebyshev basis of total
degree $18$, augmented at
each disc by the harmonic multipoles $\log|w|$, $\mathrm{Re}\,w^{-n}$,
$\mathrm{Im}\,w^{-n}$ ($w=z-c_i$, $n\le3$). The multipoles adapt the
dictionary to the operator: harmonic features carry no diffusion
term, $\gen\varphi=f\cdot\nabla\varphi$, and without them the LP meets
\eqref{c4} at its grid points while $\gen V$ reaches $+19$ between
them. The objective $J=\sum_i\max\{0,V(x_i)-2\}$ penalizes the excess
of $V$ over the claim level, which enlarges $S_{\rm cert}$. The
verifier then proves the candidate: \eqref{c1} and \eqref{c4} by
interval branch and bound over the state box ($6.9\times10^{5}$
boxes), the latter at the certified rate $\zeta_v=4.4\times10^{-3}$
that enters Theorem~\ref{thm:main}, and \eqref{c5} in exact rational
arithmetic on the guard ($5.7\times10^{4}$ boxes). It found
two defects missed by grid checks, $V$ dipping to
$-5\times10^{-4}$ and below $\lambda$ near the disc centers, and the
candidate was shifted accordingly by
$\varepsilon=5\times10^{-4}$, at a cost of
$\varepsilon/\lambda=5\times10^{-5}$ in the bound, and raised to
$\lambda$ inside the discs, where it is never differentiated and
where the only reset landing point already has $V\ge\lambda$.

\emph{Results.} Fig.~\ref{fig:hopper}(a) overlays the certified
boundary $\{V=2\}$ on the true boundary $\{p=\bar p\}$: the certified
region lies inside the true one, as Theorem~\ref{thm:main} requires,
and covers $93.6\%$ of its area. Fig.~\ref{fig:hopper}(b) plots the
pointwise deficit $p-(1-V/\lambda)$, nonnegative at every state, with
median $2.15$ percentage points (pp) overall and $0.60$\,pp where
$p\ge0.9$. To attribute the gap, we compare with the canonical
certificate at the LP's rate, $V^{\star}:=\lambda (1-p)+\zeta T$, which
satisfies \eqref{c1} and \eqref{c3}--\eqref{c5}, the last with
equality. The delivered $V$ dominates $V^{\star}$ pointwise (checked
on the grid), so $S_{\rm cert}\subseteq\{V^{\star}\le2\}$; the latter
covers $99.2\%$ of the true area, and the term $\zeta T/\lambda$ has
median $0.21$\,pp. Reading the same certificate at other claim levels
$\bar p\in[0.6,0.95]$ moves the coverage of
$\{V\le\lambda(1-\bar p)\}$ from $95.2\%$ to $89.5\%$, while that of
$\{V^{\star}\le\lambda(1-\bar p)\}$ stays within $98.4$--$99.5\%$. The
conservatism is therefore a property of the function class, not of
Theorem~\ref{thm:main}. Along the guard \eqref{c5} is tight (margin
$2\times10^{-5}$), and the failed-recovery branch contributes $37\%$
of $\E[V(x^{+})]$.

\begin{figure*}[t]
  \centering
  \vspace{0.01in}
  \includegraphics[width=0.88\textwidth]{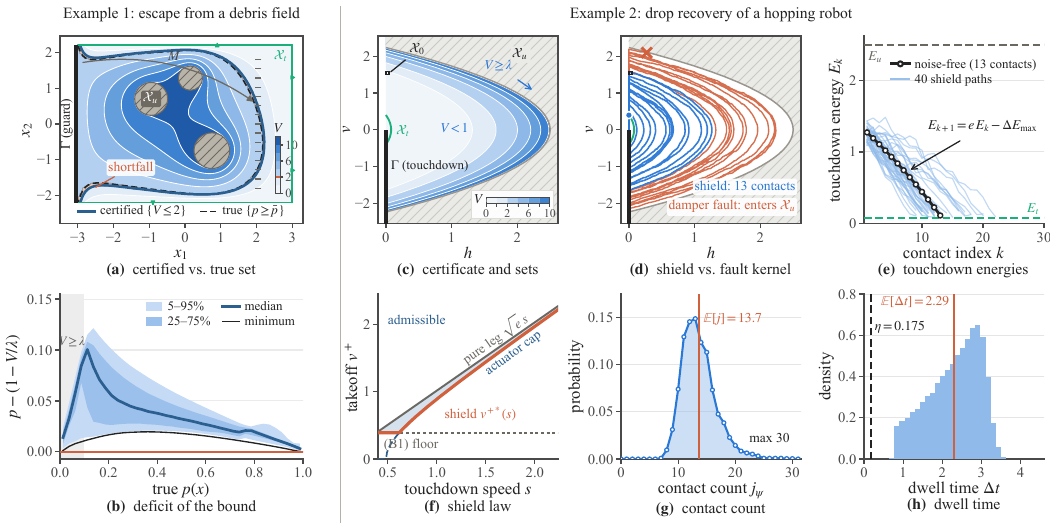}
\caption{\small \textbf{Example 1.} (a)~Certificate field $V$:
the certified boundary $\{V=\lambda(1-\bar p)\}$ against
the true boundary $\{p=\bar p\}$ from the PDE; the coral
strip between them is the conservatism, $6.4\%$ by area, and the rug
$M$ marks the support of the reset kernel. (b)~Pointwise deficit
$p-(1-V/\lambda)$ against $p$, with median, percentile bands and
per-bin minimum: nonnegative everywhere, median $2.15$\,pp, median
$0.60$\,pp where $p\ge0.9$; in the shaded band $V\ge\lambda$ and no
claim is made. \textbf{Example 2.} (c)~The degree-8
certificate with the target $\Xtgt$, the unsafe set $\Xuns$ (hatched),
the guard $\Gamma$ and the initial set $\Xini$. (d)~One shielded
recovery against one path under the kernel whose damper fails at every
stance. (e)~Touchdown energies of $40$ shielded paths against the
noise-free recursion $E_{k+1}=eE_k-\Delta E_{\max}$. (f)~The shield law
$v^{+*}(s)$ inside the admissible takeoff set, bounded by the actuator
cap, the pure-leg line and the (B1) floor. (g)~Distribution of the
contact count. (h)~Distribution of the dwell time between contacts.\vspace{-1.7em}}
  \label{fig:hopper}
\end{figure*}

\subsection{Example 2: Drop Recovery of a Hopping Robot}\label{sec:hopper}

\emph{System.} A Raibert-style monoped lands with too much energy
and must dissipate the excess over many stances
\cite{raibert1986legged}. Flight is exactly ballistic with velocity
noise,
\begin{equation}\label{eq:hop-flow}
  \mathrm dh=v\,\mathrm dt,\qquad
  \mathrm dv=-g\,\mathrm dt+\sigma\,\mathrm dW,\qquad
  E:=\tfrac12v^{2}+gh, \nonumber
\end{equation}
with $g=1$ and $\sigma=0.05$; the specification for 
energy $E$ is:
  $\Xtgt=\{E\le0.08\},\;
  \Xuns=\{E\ge2.5\}, \; \textup{and }
  \Xini=[0,0.05]\times[1.50,1.58]$
the over-energetic launch
($E_0\in[1.13,1.30]$). 
At touchdown, on $\Guard=\{h=0\}$ (where
$v^{-}\le0$, every reset relaunching upward), the stance
resets the velocity to
\begin{equation}\label{eq:hop-reset}
\begin{aligned}
  v^{+}&=\sqrt{e\,(v^{-})^{2}-2\Delta E}+\epsilon,\\
  \Delta E&\in\big[0, \min\{\Delta E_{\max},\,
  eE^{-}-\tfrac12(v_{\min}+w)^{2}\}\big],
\end{aligned}
\end{equation}
where the leg gain is $e=1.03$, the damper absorbs at most $\Delta E_{\max}=0.12$,
and $\epsilon\sim U[-w,w]$ with $w=0.04$. This is case (ii) of (B1),
with $h$ the height and $\gen h=v$: the noise enters $v$ only, so
$\nabla h^{\top}g\equiv0$, and the upper limit of the action set is
the truncation, keeping $v^{+}\ge\delta=v_{\min}=0.35$ almost surely at every contact outside the target ($|v^{-}|\ge0.40$), the only
contacts that precede $\st$; (B2) holds with $\amax=1$ (the gravitational acceleration) and $\sbar=\sigma$. With probability
$\rho_s=0.003$ the damper fails and the stance is pure leg
($\Delta E=0$), which increases the energy.

Because $\gen E=\sigma^{2}/2>0$, the flow never removes energy and
all progress is made at contacts. The additive cap turns the
touchdown energies into a staircase $E_{k+1}=eE_k-\Delta E$ that
needs about $13$ steps. We certify $p=0.90$ ($\lambda=10$).

\emph{Synthesis and verification.} We solve \eqref{eq:lp} over the
generic dictionary of all $m=45$ monomials $h^{i}v^{j}$ of total
degree at most $8$, for the declared policy that absorbs as much
energy as the action set allows,
$\Delta E=\min\{\Delta E_{\max},\,eE^{-}-\tfrac12(v_{\min}+w)^{2}\}$.
Under this policy the moments $\E[(v^{+})^{k}]$ of the reset law, a
mixture of the commanded takeoff and the damper fault, have closed
forms, so the rows of \eqref{c5} are exact rather than sampled. The
objective is feasibility: the LP maximizes a uniform margin over all
constraints at a fixed rate $\zeta$, and on a descending ladder of
rates it is first feasible at $\zeta=6\times10^{-3}$; degrees $4$ and
$6$ are infeasible at that rate. Low rates are intrinsic to the
example, since Lemma~\ref{lem:settling} gives
$\zeta\le V(x_0)/\E[\st]\le1/\E[\st]$ and the recovery takes about
$30$ time units. Because \eqref{c4} and \eqref{c5} are owed only on
the corridor $\{V<\lambda\}$, which depends on $V$, their grids are
scoped to the corridor of the previous iterate, and the points that a
finer grid finds violated return to the LP as rows; two iterations
suffice.

The verifier then proves the candidate, rounded to nine decimals, in
exact rational arithmetic: \eqref{c1}--\eqref{c4} by Bernstein branch
and bound over the state box, \eqref{c4} at the certified rate
$\zeta_v=5\times10^{-3}$, and \eqref{c5} along the guard section
outside the target, with the square roots of the reset law enclosed
by outward rational bounds ($380$ boxes in all). It refuted the first
candidate: on the boundary $\{E=E_u\}$ near the guard corner
($h\approx0$, $v\approx-2.24$) it found $V=\lambda-5\times10^{-3}$, a
violation of \eqref{c3} that both grids had stepped over. Rows
sampled along that curve were added to the LP, and the proof of the
new candidate closes. Along the guard, $V(0,\cdot)$ is increasing in the takeoff speed, so
the greedy shield of Corollary~\ref{cor:shield} needs no optimization: it commands the lowest admissible takeoff speed,
$v^{+*}(s)=\max\{\sqrt{e s^{2}-2\Delta E_{\max}},\,v_{\min}+w\}$ for a
touchdown at speed $s=|v^{-}|$, the actuator cap binding at fast
touchdowns and the (B1) floor $v_{\min}+w=0.39$ at slow ones. 

\emph{Results.} Over $6000$ closed-loop paths from the worst corner
of $\Xini$, the greedy shielded loop reaches $\Xtgt$ with
$\hat p=0.9997\ge0.90$ and no failures; the two paths cut by the
simulation horizon count against $\hat p$.
Fig.~\ref{fig:hopper}(c) shows the certificate: $V$ reaches
$\lambda$ at the structural limit, so \eqref{c3} alone confines the
corridor. Panel~(d) contrasts one shielded recovery with the kernel
whose damper fails at every stance: the shielded flight arcs shrink
toward $\Xtgt$, the faulty ones grow into $\Xuns$. Panel~(e) shows
the touchdown energies falling in near-constant decrements of at
most $\Delta E_{\max}$, so the actuator, not the noise, sets the
length of the staircase. Panel~(f) shows the shield law on the lower
edge of the action set, the cap binding at fast touchdowns and the
(B1) floor at slow ones. Panels~(g) and (h) test the budgets: the
contact count has mode $12$ and maximum $30$, with $\E[\jumps]=13.8$
and $\E[\st]=32.6$ against the bounds $1144$ and $200$ of
Corollary~\ref{cor:budget} and Lemma~\ref{lem:settling} at
$\zeta_v$, and over $n=76\,658$ contacts the mean dwell time is
$2.30$ against the $\eta=0.175$ of Lemma~\ref{lem:non-zeno}.

\subsection{Example 3: Dribbling through an Obstacle Course}\label{sec:dribble}

\emph{System.} A ball is dribbled $16$\,m across a field with three
staggered obstacles by kicking it whenever it comes to rest. Between
kicks it rolls under Coulomb friction and turf noise,
\begin{equation}\label{eq:dri-flow}
  \mathrm dp=v\,\mathrm dt,\qquad
  \mathrm dv=-A_R\,\tfrac{v}{|v|}\,\mathrm dt+\sigma_r\,\mathrm dW,
\end{equation}
with $A_R=0.5$ and $\sigma_r=5\times10^{-4}$. The specification is
  $\Xtgt=\{|p-p^{*}|\le0.5\},\;
  \Xuns=\textstyle\bigcup_{i=1}^{3}B(c_i,r_i), \;
  \Xini=\{|p|_\infty\le0.1,\ |v|=v_g\},$
with $p^{*}=(16,0)$ and obstacles centered at $(4.5,0.68)$,
$(8.5,-0.85)$, $(12.5,0.60)$ with radii $0.45$, $0.60$, $0.45$, so
that the straight line from the origin to the target passes the
nearest one at $0.15$\,m (Fig.~\ref{fig:dribble}). The guard is
$\Guard=\{|v|=v_g\}$ with $v_g=0.05$ and $h=|v|-v_g$; states on the guard are called rest states below. Every
episode starts on the guard near the origin, so the first kick is
taken at once. A kick chooses a speed and a direction,
$u=(s,\varphi)\in[0.7,1.8]\times[0,2\pi)$, and resets
\begin{equation}\label{eq:dri-reset}
  p^{+}=p,\qquad
  v^{+}=(s+k_s w)\,
  \big(\cos(\varphi+\theta),\,\sin(\varphi+\theta)\big)^{\top}, \nonumber
\end{equation}
with $w\sim U[-1,1]$, $k_s=0.1$, and aiming noise
$\theta\sim\mathcal N(0,\sigma_\theta^{2})$, $\sigma_\theta=0.05$,
clamped to $|\theta|\le0.1$. Asm.~\ref{asm:wellposed} holds in
case (i): $|v^{+}|\ge0.6$, so $\delta=0.6-v_g=0.55$,
and $\gen h=-A_R$ $+\sigma_r^{2}/(2|v|)$, $\|\nabla h^{\top}g\|=\sigma_r$
give (B2) with $\amax=0.5$, $\sbar=5\times10^{-4}$. We take
$\lambda=20$ and certify $p=0.95$.

\begin{figure*}[t]
  \centering
  \captionsetup{skip=1pt}
  \vspace{0.01in}
  \includegraphics[width=\textwidth]
  {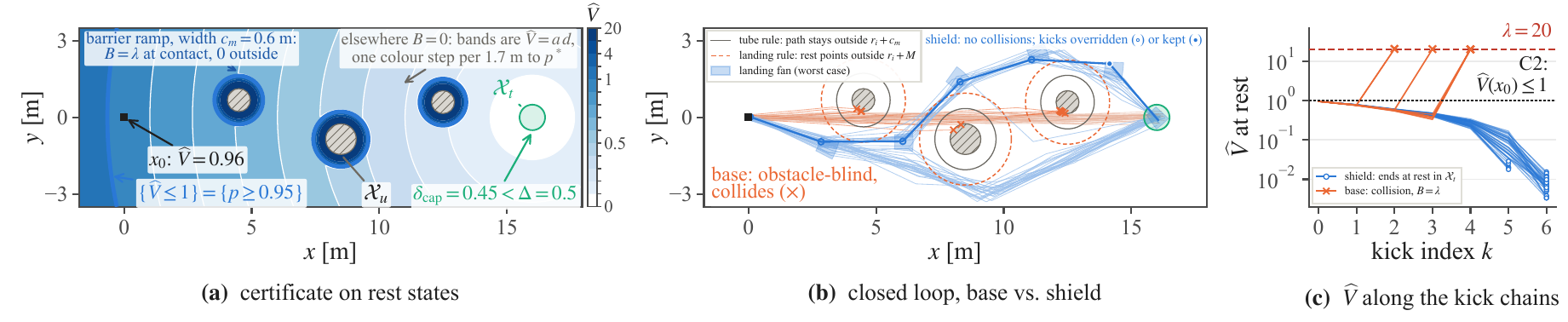}
  \caption{\small Example 3. (a)~The certificate $\widehat V$ on rest states;
    the level $\widehat V=1$ bounds the certified set $\{p\ge0.95\}$.
    (b)~Closed loop, $40$ episodes per policy: the obstacle-blind base
    against the shielded loop, with the worst-case landing fan of
    every kick of one shielded episode. (c)~$\widehat V$ along the
    kick chains, against the budget $\widehat V(x_0)\le1$ of
    \eqref{c2} and the level $\lambda$.
    \vspace{-1.8em}
    }
  \label{fig:dribble}
\end{figure*}

\emph{Certificate and verification.} Let
$p_{\rm stop}=p+v|v|/(2A_R)$ be where the ball would come
to rest under the drift alone, $c_i$ the clearance of the remaining
path $[p,p_{\rm stop}]$ from obstacle $i$, and $t_{\rm stop}$ the time
to rest, mollified at $v=0$. We take
\begin{equation}\label{eq:dri-cert}
  \widehat V=a\,D_\varepsilon+\textstyle\sum_{i}B(c_i)
  +\kappa\,t_{\rm stop},
\end{equation}
with $D_\varepsilon$ the mollified distance from $p_{\rm stop}$ to
$p^{*}$ and $B(c)=\lambda(1-u)^{3}(1+3u)$, $u=c/c_m$ clipped to
$[0,1]$, a ramp from $B=\lambda$ at contact to $B=0$ at the barrier
width $c_m$. The parameters follow from the conditions: $c_m=0.6$\,m
so that the ramps fit the gaps between the obstacles, $a=0.06$ so that
\eqref{c2} holds, with $\widehat V\le0.967$ on $\Xini$, and
$\kappa=6.8\times10^{-3}$, $15\%$ above the smallest value at which
the drift dominates the noise in \eqref{c4}. Conditions \eqref{c1}
and \eqref{c3} hold by construction, since every term is nonnegative
and $B=\lambda$ on $\Xuns$. For \eqref{c4}, under the drift alone
$p_{\rm stop}$ is constant, each $c_i$ is nondecreasing, and
$t_{\rm stop}$ decreases at a rate of at least $0.7$ on the band
$v_g\le|v|\le\bar v=1.95$, so the drift of $\gen\widehat V$ is at
most $-0.7\kappa$; the It\^o term is enclosed by the verifier on
$5.8\times10^{5}$ boxes in $(p,|v|,\angle v)$, which gives
$\gen\widehat V\le-\zeta_v$ with $\zeta_v=2.2\times10^{-3}$ on the
band for every $p$. Kicks land at $|v^{+}|\le1.9<\bar v$, and by
Remark~\ref{rem:kpi} the probability of leaving the band before $\st$
is at most $\E[\jumps]\exp(-2A_R(\bar v-1.9)/\sigma_r^{2})$,
numerically zero. Condition \eqref{c5} is verified on a grid of rest states, a Lipschitz
correction extending each verdict to its cell: for every admissible
kick, the barrier term vanishes on the whole support of the kick noise,
by the tube rule, so the integrand of $\E[\widehat V(x^{+})\mid x,u]$
is smooth and its quadrature error is below $2\times10^{-5}$; the
condition holds with margin $\zeta_d=5\times10^{-3}$ at every rest
state outside $\Xtgt$ with clearance at least $1.09$\,m. Finally, $\widehat V$ is $C^{1}$ with locally Lipschitz gradient and, on the corridor, $C^{2}$ off the surfaces where the nearest point of the predicted path to an obstacle switches between an endpoint and the interior of the segment, which the velocity noise crosses transversally, so (R1) holds in the sense of Remark~\ref{rem:asm-roles}.

\emph{Policy.} The reset policy is the shield of
Corollary~\ref{cor:shield} around a base proposal that aims at the
target with range-matched speed. The proposal is kept if the swept
path of the kick clears every obstacle by $c_m$ (tube rule), its
worst-case stop point clears every obstacle by $M=1.2$\,m (landing
rule), and the expected decrease of $\widehat V$ is at least
$\zeta_d$; otherwise the policy plays the
kick with the largest expected decrease among those on a fixed grid
of four speeds and $25$ deflections that pass the first two tests.
The landing rule keeps every rest state of the chain at clearance
$M$ or more, inside the region where \eqref{c5} holds, which
localizes compatibility as in Remark~\ref{rem:kpi}. With
$\widehat V\le0.967$ on $\Xini$, Remark~\ref{rem:pointwise} certifies
$p\ge1-0.967/20=0.951$, Corollary~\ref{cor:budget} gives
$\E[\jumps]\le0.967/\zeta_d=194$ kicks, and Lemma~\ref{lem:settling}
gives $\E[\st]\le0.967/\zeta_v=440$\,s.

\emph{Results.} Fig.~\ref{fig:dribble}(a) shows $\widehat V$ on rest states. Its level set $\widehat V=1=\lambda(1-0.95)$ bounds the certified set $\{p\ge0.95\}$: a disc of radius $16.7$\,m about the target with a hole punched at each obstacle, and a certified capture radius of $0.45$\,m inside the $0.5$\,m task disc, so the guarantee is
end-to-end. Panel~(b) shows the closed loop. Over $200$ episodes the
obstacle-blind base policy collides in $68$; the shielded loop reaches
the target in all $200$ without a collision, overriding $60\%$ of its
$5.9$ kicks per episode, and every one of its $1174$ rest states
admits a certified kick. The shielded paths pass each obstacle on the
side away from the straight line, their swept paths outside the solid
rings and their rest points outside the dashed ones. Panel~(c) tracks $\widehat V$ along the kick chains: shielded chains
start at $\widehat V(x_0)=0.96$, within the budget of \eqref{c2}, and
decrease to the target, while a base collision drives $B$, and with it
$\widehat V$, to $\lambda$, so that a collision is a level crossing in
the sense of \eqref{eq:ville}.
All $233$ kicks played by the shield satisfy the test condition, with
$\E[\widehat V(x^{+})\mid x,u]-\widehat V(x)\le-0.018$; $137$ base
proposals are rejected. 
The base policy fails the test on $172$ of
its $181$ kicks and comes to rest as close as $0.1$\,m to an obstacle,
whereas no shielded chain rests closer than $1.33$\,m, as the landing rule requires.

\let\oldthebib\thebibliography
\renewcommand{\thebibliography}[1]{\oldthebib{#1}\fontsize{8}{8.3}\selectfont}
\bibliographystyle{ieeetr}
\bibliography{references}

@book{oksendal2003sde,
  author    = {{\O}ksendal, Bernt},
  title     = {Stochastic Differential Equations: An Introduction with
               Applications},
  edition   = {6th},
  publisher = {Springer},
  year      = {2003}
}

@book{dellacherie1978probabilities,
  author    = {Dellacherie, Claude and Meyer, Paul-Andr{\'e}},
  title     = {Probabilities and Potential},
  publisher = {North-Holland},
  year      = {1978}
}

@article{davis1984pdmp,
  author  = {Davis, M. H. A.},
  title   = {Piecewise-Deterministic {M}arkov Processes: A General Class of
             Non-Diffusion Stochastic Models},
  journal = {Journal of the Royal Statistical Society, Series B},
  volume  = {46},
  number  = {3},
  pages   = {353--388},
  year    = {1984}
}

@inproceedings{hu2000towards,
  author    = {Hu, Jianghai and Lygeros, John and Sastry, Shankar},
  title     = {Towards a Theory of Stochastic Hybrid Systems},
  booktitle = {Hybrid Systems: Computation and Control (HSCC)},
  year      = {2000}
}

@article{bujorianu2006toward,
  author  = {Bujorianu, Manuela L. and Lygeros, John},
  title   = {Toward a General Theory of Stochastic Hybrid Systems},
  journal = {Lecture Notes in Control and Information Sciences},
  year    = {2006}
}

@article{teel2014survey,
  author  = {Teel, Andrew R. and Subbaraman, Anantharaman and
             Sferlazza, Antonino},
  title   = {Stability Analysis for Stochastic Hybrid Systems: A Survey},
  journal = {Automatica},
  volume  = {50},
  number  = {10},
  pages   = {2435--2456},
  year    = {2014}
}

@article{bensoussan1984impulse,
  author    = {Bensoussan, Alain and Lions, Jacques-Louis},
  title     = {Impulse Control and Quasi-Variational Inequalities},
  publisher = {Gauthier-Villars},
  year      = {1984}
}

@article{steinhardt2012finite,
  author  = {Steinhardt, Jacob and Tedrake, Russ},
  title   = {Finite-Time Regional Verification of Stochastic Non-Linear
             Systems},
  journal = {The International Journal of Robotics Research},
  volume  = {31},
  number  = {7},
  pages   = {901--923},
  year    = {2012}
}

@article{santoyo2021barrier,
  author  = {Santoyo, Cesar and Dutreix, Maxence and Coogan, Samuel},
  title   = {A Barrier Function Approach to Finite-Time Stochastic System
             Verification and Control},
  journal = {Automatica},
  volume  = {125},
  year    = {2021}
}

@article{summers2010verification,
  author  = {Summers, Sean and Lygeros, John},
  title   = {Verification of Discrete Time Stochastic Hybrid Systems: A
             Stochastic Reach-Avoid Decision Problem},
  journal = {Automatica},
  volume  = {46},
  number  = {12},
  pages   = {1951--1961},
  year    = {2010}
}

@inproceedings{zikelic2023learning,
  author    = {{\v Z}ikeli{\'c}, Dor{\dj}e and Lechner, Mathias and
               Henzinger, Thomas A. and Chatterjee, Krishnendu},
  title     = {Learning Control Policies for Stochastic Systems with
               Reach-Avoid Guarantees},
  booktitle = {AAAI Conference on Artificial Intelligence},
  year      = {2023}
}

@book{revuz1999continuous,
  author    = {Revuz, Daniel and Yor, Marc},
  title     = {Continuous Martingales and Brownian Motion},
  edition   = {3rd},
  publisher = {Springer},
  year      = {1999}
}

@book{raibert1986legged,
  author    = {Raibert, Marc H.},
  title     = {Legged Robots That Balance},
  publisher = {MIT Press},
  address   = {Cambridge, MA},
  year      = {1986}
}

@article{mathiesen2023safety,
  author  = {Mathiesen, Frederik Baymler and Calvert, Simeon C. and Laurenti, Luca},
  title   = {Safety Certification for Stochastic Systems via Neural Barrier Functions},
  journal = {IEEE Control Systems Letters}, volume = {7}, pages = {973--978}, year = {2023}
}

@inproceedings{neustroev2025neural,
  author    = {Neustroev, Grigory and Giacobbe, Mirco and Lukina, Anna},
  title     = {Neural Continuous-Time Supermartingale Certificates},
  booktitle = {AAAI Conference on Artificial Intelligence}, year = {2025}
}

@book{williams1991probability,
  title     = {Probability with Martingales},
  author    = {Williams, David},
  year      = {1991},
  publisher = {Cambridge University Press},
  address   = {Cambridge, U.K.}
}

@book{kushner1967stochastic,
  author    = {Kushner, Harold J.},
  title     = {Stochastic Stability and Control},
  publisher = {Academic Press},
  address   = {New York},
  year      = {1967}
}

@article{prajna2007framework,
  author  = {Prajna, Stephen and Jadbabaie, Ali and Pappas, George J.},
  title   = {A Framework for Worst-Case and Stochastic Safety Verification
             Using Barrier Certificates},
  journal = {IEEE Transactions on Automatic Control},
  volume  = {52},
  number  = {8},
  pages   = {1415--1428},
  year    = {2007}
}

@inproceedings{lavaei2022shs,
  author    = {Lavaei, Abolfazl and Soudjani, Sadegh and Frazzoli, Emilio},
  title     = {Safety Barrier Certificates for Stochastic Hybrid Systems},
  booktitle = {American Control Conference (ACC)},
  pages     = {880--885},
  year      = {2022}
}

@article{clark2021control,
  author  = {Clark, Andrew},
  title   = {Control Barrier Functions for Stochastic Systems},
  journal = {Automatica},
  volume  = {130},
  pages   = {109688},
  year    = {2021}
}

@article{ames_control_2017,
  author  = {Ames, Aaron D. and Xu, Xiangru and Grizzle, Jessy W. and
             Tabuada, Paulo},
  title   = {Control Barrier Function Based Quadratic Programs for
             Safety Critical Systems},
  journal = {IEEE Transactions on Automatic Control},
  volume  = {62},
  number  = {8},
  pages   = {3861--3876},
  year    = {2017}
}

@article{esfahani2016stochastic,
  author  = {Mohajerin Esfahani, Peyman and Chatterjee, Debasish and
             Lygeros, John},
  title   = {The Stochastic Reach-Avoid Problem and Set Characterization
             for Diffusions},
  journal = {Automatica},
  volume  = {70},
  pages   = {43--56},
  year    = {2016}
}

@article{abate2008probabilistic,
  author  = {Abate, Alessandro and Prandini, Maria and Lygeros, John and
             Sastry, Shankar},
  title   = {Probabilistic Reachability and Safety for Controlled Discrete
             Time Stochastic Hybrid Systems},
  journal = {Automatica},
  volume  = {44},
  number  = {11},
  pages   = {2724--2734},
  year    = {2008}
}

@inproceedings{cosner23robust,
  author    = {Cosner, Ryan K. and Culbertson, Preston and Taylor, Andrew J.
               and Ames, Aaron D.},
  title     = {Robust Safety under Stochastic Uncertainty with Discrete-Time
               Control Barrier Functions},
  booktitle = {Robotics: Science and Systems},
  year      = {2023}
}

@article{cosner_bounding_2024,
  author  = {Cosner, Ryan K. and Culbertson, Preston and Ames, Aaron D.},
  title   = {Bounding Stochastic Safety: Leveraging {F}reedman's Inequality
             with Discrete-Time Control Barrier Functions},
  journal = {IEEE Control Systems Letters},
  volume  = {8},
  year    = {2024}
}

@inproceedings{bansal2017hj,
  author    = {Bansal, Somil and Chen, Mo and Herbert, Sylvia and
               Tomlin, Claire J.},
  title     = {{H}amilton--{J}acobi Reachability: A Brief Overview and
               Recent Advances},
  booktitle = {IEEE Conference on Decision and Control (CDC)},
  pages     = {2242--2253},
  year      = {2017}
}

@article{borquez2026hj,
  author  = {Borquez, Javier and Peng, Shuang and Bansal, Somil},
  title   = {{H}amilton--{J}acobi Reachability for Hybrid Systems: Unified
             Goal-Driven Control with Safety Guarantees},
  journal = {The International Journal of Robotics Research},
  year    = {2026},
  doi     = {10.1177/02783649261477777}
}

@article{grizzle2001asymptotically,
  author  = {Grizzle, Jessy W. and Abba, Gabriel and Plestan, Franck},
  title   = {Asymptotically Stable Walking for Biped Robots: Analysis via
             Systems with Impulse Effects},
  journal = {IEEE Transactions on Automatic Control},
  volume  = {46},
  number  = {1},
  pages   = {51--64},
  year    = {2001}
}

@article{westervelt2003hybrid,
  author  = {Westervelt, Eric R. and Grizzle, Jessy W. and
             Koditschek, Daniel E.},
  title   = {Hybrid Zero Dynamics of Planar Biped Walkers},
  journal = {IEEE Transactions on Automatic Control},
  volume  = {48},
  number  = {1},
  pages   = {42--56},
  year    = {2003}
}

@book{goebel2012hybrid,
  author    = {Goebel, Rafal and Sanfelice, Ricardo G. and Teel, Andrew R.},
  title     = {Hybrid Dynamical Systems: Modeling, Stability, and Robustness},
  publisher = {Princeton University Press},
  year      = {2012}
}

@article{lygeros2003dynamical,
  author  = {Lygeros, John and Johansson, Karl Henrik and Simi{\'c}, Slobodan N.
             and Zhang, Jun and Sastry, S. Shankar},
  title   = {Dynamical Properties of Hybrid Automata},
  journal = {IEEE Transactions on Automatic Control},
  volume  = {48},
  number  = {1},
  pages   = {2--17},
  year    = {2003}
}

\end{document}